\definecolor{bleudefrance}{rgb}{0.19, 0.55, 0.91}
\definecolor{ao(english)}{rgb}{0.0, 0.5, 0.0}
\newcommand{\enrique}[1]{  \ifthenelse{\boolean{showcomments}}
{\todo[inline,color=bleudefrance]{Enrique says: #1}}{}}
\newcommand{\addcite}[0]{\ifthenelse{\boolean{showcomments}}
{\textcolor{purple}{(add cite(s)) }}{}}%
\newcommand{\emmargin}[1]{\ifthenelse{\boolean{showcomments}}
{\todo{Enrique: #1)}}{}}
\newcommand{\adde}[1]{
\ifthenelse{\boolean{showedits}}
{\added[id=EM]{#1}}
{\!#1\hspace{-4.75pt}}
}
\newcommand{\repe}[2]{
\ifthenelse{\boolean{showedits}}
{\replaced[id=EM]{#1}{#2}}
{\!#1\hspace{-4.75pt}}
}
\newcommand{\dele}[1]{
\ifthenelse{\boolean{showedits}}
{\deleted[id=EM]{#1}}
{}
}
\newcommand{\addy}[1]{
\ifthenelse{\boolean{showedits}}
{\added[id=YJ]{#1}}
{\!#1\hspace{-4.75pt}}
}
\newcommand{\repy}[2]{
\ifthenelse{\boolean{showedits}}
{\replaced[id=YJ]{#1}{#2}}
{\!#1\hspace{-4.75pt}}
}
\newcommand{\dely}[1]{
\ifthenelse{\boolean{showedits}}
{\deleted[id=YJ]{#1}}
{}
}
\DeclareSIUnit[]{\pu}{p.u.}
\DeclareSIUnit[]{\VA}{VA}
\DeclareSymbolFont{bbold}{U}{bbold}{m}{n}
\DeclareSymbolFontAlphabet{\mathbbold}{bbold}
\newcommand{\diag}[1]{\ensuremath{\mathrm{diag}(#1)}}
\DeclarePairedDelimiterX\Set[2]{\lbrace}{\rbrace}%
{ #1 \,\delimsize| \,\mathopen{} #2 }
\newcommand{\real}[0]{\mathbb R}
\newcommand*\circled[1]{\tikz[baseline=(char.base)]{\node[shape=circle,draw,inner sep=0.05pt] (char) {#1};}}
\newtheoremstyle{bfnote}%
{}{}%
{\itshape}{}%
{\bfseries}{.}%
{ }%
{\thmname{#1}\thmnumber{ #2}\thmnote{ (#3)}}
\theoremstyle{bfnote}
\newtheorem{thm}{Theorem}
\newtheorem{rem}{Remark}
\newtheorem{lem}{Lemma}
\newtheorem{ass}{Assumption}
\newtheorem{definition}{Definition}
\setlist[itemize]{leftmargin=*}
\title{\LARGE \bf
Adaptive Pricing for Optimal Coordination in Networked Energy Systems with Nonsmooth Cost Functions
}
\author{Jiayi Li$^{1}$, Jiale Wei$^{2}$, Matthew Motoki$^{1}$, Yan Jiang$^{2}$, and Baosen Zhang$^{1}$% <-this % stops a space
\thanks{$^{1}$J. Li, M. Motoki and B. Zhang are with the Electrical and Computer Engineering, University of Washington, Seattle, WA. They are partially supported NSF grant ECCS-2153937 and the Washington Clean Energy Institute. {\tt\small \{ljy9712,mmotoki,zhanbao\}@uw.edu}}%
\thanks{$^{2}$ J. Wei and Y. Jiang are with the School of Science and Engineering, The Chinese University of Hong Kong, Shenzhen, GD 518172, CHN (email: {\tt\small jialewei@link.cuhk.edu.cn}; {\tt\small yjiang@cuhk.edu.cn}). Y. Jiang is partially supported by CUHKSZ University Development Fund and part of this research was performed while the author was visiting the
Institute for Mathematical and Statistical Innovation at University of Chicago, which is supported by the NSF grant DMS-1929348.} 
}
\begin{document}

\maketitle
\thispagestyle{empty}
\pagestyle{empty}

%%%%%%%%%%%%%%%%%%%%%%%%%%%%%%%%%%%%%%%%%%%%%%%%%%%%%%%%%%%%%%%%%%%%%%%%%%%%%%%%
\begin{abstract}
Incentive-based coordination mechanisms for distributed energy consumption have shown promise in aligning individual user objectives with social welfare, especially under privacy constraints. Our prior work proposed a two-timescale adaptive pricing framework, where users respond to prices by minimizing their local cost, and the system operator iteratively updates the prices based on aggregate user responses. A key assumption was that the system cost need to smoothly depend on the aggregate of the user demands. In this paper, we relax this assumption by considering the more realistic model of where the cost are determined by solving a DCOPF problem with constraints. We present a generalization of the pricing update rule that leverages the generalized gradients of the system cost function, which may be nonsmooth due to the structure of DCOPF. We prove that the resulting dynamic system converges to a unique equilibrium, which solves the social welfare optimization problem. Our theoretical results provide guarantees on convergence and stability using tools from nonsmooth analysis and Lyapunov theory. Numerical simulations on networked energy systems illustrate the effectiveness and robustness of the proposed scheme.
% This paper extends the framework to networked electricity systems governed by Direct Current-Optimal Power Flow (DC-OPF) constraints, where feasibility and cost depend on both user decisions and the underlying network structure. 
\end{abstract}

%%%%%%%%%%%%%%%%%%%%%%%%%%%%%%%%%%%%%%%%%%%%%%%%%%%%%%%%%%%%%%%%%%%%%%%%%%%%%%%%
\section{INTRODUCTION}
Modern energy systems are undergoing a significant transformation, marked by the increasing prevalence of distributed energy resources (DERs), responsive loads, and the emergence of more autonomous devices. These developments have created opportunities for customers to actively participate in system operations. However, unlike dispatchable resources, customers often cannot be directly controlled by an operator~\footnote{ Direct load control exists and have been implemented, but are often constrained by the number of times they can be called and duration~\cite{ruiz2009direct,chen2014distributed}, and we do not explore this class of resources in this paper.} and must be coordinated through some form of incentives~\cite{rahimi2010demand}. But the system and its customers often have competing objectives: system operators strive to achieve global objectives like efficiency, reliability, fairness and stability of the network, individual users optimize their private costs and preferences that are often unknown or unobservable. In this paper, we study how to achieve alignment between the system objective and user objective while keeping most of the information about the users private. 

% create unprecedented opportunities for decentralized coordination and demand-side participation in grid operations. 
% At the same time, they introduce complex challenges in aligning decentralized user decisions with global objectives: while system operators strive to achieve global objectives like efficiency, reliability, fairness and stability of the network, individual users optimize their private costs and preferences that are often unknown or unobservable. 
% Bridging this gap between decentralized user behavior and centralized system goals lies at the heart of modern electricity market design and control.

Incentive-based coordination mechanisms have received extensive attention and are one of the main features of power systems with communication capabilities. In the context of demand response in electricity markets, incentives can take many different forms, ranging from alert/text-based signals~\cite{peplinski2023residential} to pricing~\cite{vardakas2014survey}. In this paper, we focus on price-based incentives: a system operator broadcasts prices, users respond by adjusting their consumption to minimize their individual costs, the operator adjusts the prices based on the user responses, etc. Ideally, this iterative interaction should converge to an optimal solution that balances user cost and system performance. The major obstacle is that the operator typically lacks access to users' cost functions, either due to privacy concerns or because users themselves rely on complex or black-box control strategies (e.g., reinforcement learning)~\cite{li2017distributed,khezeli2017risk,kong2020online}. This limits the effectiveness of many pricing schemes and makes theoretical analysis difficult.

% The incentive mechanism we adopt is from~\cite{maheshwari2022inducing}, which is a type of dynamic incentive that evolves with the actions of the users. 
% The key technical questions in these types of mechanisms are twofold: whether the price update converges, and if so, whether it leads to socially optimal user behaviors. 

A previous work \cite{LI2024EPSR} proposed a two-timescale adaptive pricing framework that is adopted from a dynamic incentive \cite{maheshwari2022inducing} that evolves with the actions of the users. 
In this framework, users act as price takers, optimizing their local behavior in response to a broadcast price signal, while the operator iteratively updates prices based on observed aggregate consumption.
Its iterative update circumvents the need for user-specific knowledge. The main result showed that under mild conditions--such as monotonicity of user response with respect to price--this adaptive scheme converges to the solution of a global social welfare optimization problem.

% In this two-timescale algorithm:
% \begin{enumerate}
%     \item users adjust their demand by minimizing private costs given a price signal;
%     \item the operator updates the price based on aggregate user behavior.
% \end{enumerate}
% This iterative mechanism enables the operator to steer users toward a socially optimal outcome without requiring knowledge of their private utility functions. 
% We proved that under mild conditions—such as monotonicity of user response with respect to price—this adaptive scheme converges to the solution of a global social welfare optimization problem.

However,~\cite{LI2024EPSR} made a key simplifying assumption: that prices and the operator's objective are a function of aggregate demand alone (hence the prices are uniform across the users).  Of course, in real-world power systems, electricity must be delivered over a physical network, where supply and demand must balance at each node, and transmission line capacities impose additional constraints. On top of these constraints, the operator solves an optimization problem, in this paper modeled as a DCOPF problem, that determines the best way to satisfy the demands. This introduces new layers of complexity, since the cost depends nonlinearly and nonsmoothly on the demand, and the prices can exhibit discontinuities. 
% The cost function becomes spatially distributed and non-separable, feasibility regions may be nonsmooth or disconnected, and optimal prices—interpreted as locational marginal prices (LMPs)—must respect power flow physics.

The nonsmoothness of the price arises quite naturally. In DCOPF problems, the feasible regions are polytopic, and when the generator costs are linear, the optimal solutions occur at the vertices of the feasible region~\cite{stott2009dc}. Therefore, a small change in load can change the set of binding constraints and in turn cause discontinuous jumps in the prices~\cite{zhang2021convex}. The algorithms in \cite{LI2024EPSR} and \cite{maheshwari2022inducing} use prices to infer gradient information about the system cost, but when the cost in DCOPF is nonsmooth and the prices are discontinuous, gradients are no longer well-defined.  

% can For example, when linear costs are used in DCOPF problems, the optimal solution occurs at the corners of the feasible region 

% Recent work on learning-based approaches to solve DCOPF \cite{chen2022learning, zhang2021convex} highlights both the promise and the difficulty of working with network-constrained optimization.
% These constraints introduce new challenges to the incentive design: the system cost is evaluated to be the optimum of the social objective function of the aggregate demand, subject to the set of binding constraints derived from network topology, and is often nonsmooth due to piecewise linear costs or discontinuities in flow feasibility. As the discontinuity introduced by network constraint makes it challenging to define the gradient, standard gradient-based pricing mechanisms may fail to converge or yield suboptimal behavior in such settings.

This paper extends the adaptive pricing framework by embedding DCOPF constraints into the system operator's objective and carefully designing pricing updates based on  generalized gradients of the (possibly nonsmooth) cost. 
Our formulation integrates network constraints directly into the operator’s cost, and we propose a pricing update rule based on generalized gradients. This rule accounts for the potential non-differentiability of the cost function due to network constraints. Our main contributions are:
\begin{enumerate}
    \item We design a vectorized price update rule based on the generalized gradient of the nonsmooth system cost induced by DCOPF, enabling implementation in realistic grid models.
    \item We prove that the proposed iterative mechanism converges to a unique equilibrium that aligns user behavior with the social welfare solution. The proof handles both linear and quadratic cost structures, using tools from convex analysis and Lyapunov stability theory for nonsmooth systems.
\end{enumerate}
This work offers a scalable and theoretically grounded approach to aligning local and global objectives in networked energy systems, opening the door to practical decentralized control under realistic grid constraints. The proposed mechanism is robust to privacy constraints, as the operator requires only demand observations and users do not need to disclose their cost functions or internal constraints, thus preserving privacy. We also demonstrate through simulations on networked scenarios that the mechanism effectively induces socially optimal behavior while maintaining system feasibility under DCOPF.

\section{Problem Formulation}

\subsection{Planner's Optimization Problem}\label{ssec:plan}
We consider a supply-demand balancing electricty market with $n$ users indexed by $i\in \mathcal{N} :=\{1,\dots, n\} $. The power demand of user $i$ is denoted by $\boldsymbol{x}_i\in\real$.
%, where $T$ depends on the scenario under consideration. For example, if we consider hourly demands of users in a day, then $T=24$. 
For a given demand profile of users $\vec{\boldsymbol{x}}:=\left(\boldsymbol{x}_i, i \in \mathcal{N} \right) \in \real^{n}$ which is the column vector obtained from the concatenation of demand vectors of all users, the disutility (or cost) of the power consumption of user $i$ is given by $f_i(\boldsymbol{x}_i)$, while the system cost in serving the demand profile of users is given by $J(\vec{\boldsymbol{x}})$. 

% \textcolor{red}{BZ: In the abstract and the intro we mentioned quadratic cost as well. But we just present the linear cost here. I added a few sentences about other cost, check if they make sense.}

We now discuss them separately:
\begin{ass}[Cost assumption]\label{ass:cost}
The following assumptions on cost functions are made throughout this manuscript:
\begin{itemize}
\item Each user disutility function $f_i(\boldsymbol{x}_i)$ is strictly convex and twice continuously differentiable;
\item The system cost function $J(\vec{\boldsymbol{x}})$ is a parametric programming determined from the DCOPF problem with linear generation costs:
\begin{align}\label{eq:Jcostfun}
%\begin{eqnarray}
		\!\!\!\!\!\!\!\!\!\!\!J(\vec{\boldsymbol{x}})\!:=\!\min_{\boldsymbol{\xi}}&\  \boldsymbol{c}^T\boldsymbol{\xi}   \\
		\mathrm{s.t.}& \  
		\mathrm{Linear\ constraints\ on\ } \boldsymbol{\xi} \mathrm{\ depending\ on\ }  \vec{\boldsymbol{x}}\nonumber
	%\end{eqnarray}	\label{eq:edp2}
\end{align}

where $\boldsymbol{c}:=\left(c_i, i \in \mathcal{G} \right) \in \real^{|\mathcal{G}|}$ denotes the vector of generation cost coefficients and $\boldsymbol{\xi}:=\left(\xi_i, i \in \mathcal{G} \right) \in \real^{|\mathcal{G}|}$ denotes the power generation from a set of generators $\mathcal{G}$. A nice feature of the optimal cost $J(\vec{\boldsymbol{x}})$ is that it is a convex function of the user demand profile $\vec{\boldsymbol{x}}$~\cite{Bertsimas1997linear}. However, although $J(\vec{\boldsymbol{x}})$ is continuous, it is not differentiable everywhere.
\end{itemize} 
\end{ass}
%\textcolor{red}{Check what condition on f: strictly convex for unique solution to (1), also need twice continously differentibal for V(p) to be regular(this ensure x(p) continously differentiable).}
%\textcolor{red}{Check what condition on g: convex from DCOPF, locally Lipchitz for existence of general gradient}
Then the system operator is interested in solving the following \emph{global social welfare problem}:
\begin{equation}\label{eq:opt-plan}
		\min_{\vec{\boldsymbol{x}}\in \real^{nT}}\quad      C(\vec{\boldsymbol{x}}):=\sum_{i \in \mathcal{N}} f_i(\boldsymbol{x}_i) +J(\vec{\boldsymbol{x}})\,,
\end{equation}
which minimizes the sum of the total disutility of all users and the system cost to serve users. 
\begin{rem} [Linear generation cost assumption]
Note that the linear cost in~\eqref{eq:Jcostfun} is in some sense the most difficult cost function to deal with, at least in our setting. If the cost is strongly convex function, for example, a quadratic cost, $J$ becomes differentiable everywhere. All of the results in the paper still hold since in that case the generalized gradient is the (standard) gradient and all sets are singletons. Therefore, we focus on linear cost functions in this paper.   
\end{rem}

As discussed before, we adopt the standard assumption that each disutility function $f_i(\boldsymbol{x}_i)$ is strictly convex and twice continuously differentiable~\cite{Mallada2017tac, DORFLER2017Auto} while the system cost function $J(\vec{\boldsymbol{x}})$ is convex and locally Lipschitz\footnote{Every convex function is locally Lipschitz~\cite{WSU1972AM}. We list locally Lipschitz property explicitly for the purpose of emphasis.}. Hence, it is easy to see that the entire objective function is strictly convex and locally Lipschitz, which implies the existence of a unique global minimizer $\vec{\boldsymbol{x}}^\star$ to problem \eqref{eq:opt-plan}~\cite[Proposition 3.1.1]{Bertsekas2009Convex}. By~\cite[Theorem 8.2]{Wright2022Optimization}, $\vec{\boldsymbol{x}}^\star$ is such a minimizer if and only if
\begin{align}\label{eq:optimal-planner}
    \boldsymbol{0}&\in \partial\left(\sum_{i \in \mathcal{N}} f_i(\boldsymbol{x}^\star_i) +J(\vec{\boldsymbol{x}}^\star)\right)\nonumber\\&=\left(\nabla f_i(\boldsymbol{x}^\star_i), i \in \mathcal{N} \right)+\partial J(\vec{\boldsymbol{x}}^\star)\,,
\end{align}
where the equality is due to the sum rule of the generalized gradient for convex functions~\cite[Chapter 2.4]{Clarke1998Nonsmooth}. The so-called generalized gradient is a counterpart to gradient for nonsmooth functions, which is often known to be subdifferential by optimization community. As mentioned in Assumption~\ref{ass:cost}, $J(\vec{\boldsymbol{x}})$ is continuous but it is not differentiable everywhere, which forces us to borrow the generalized gradient concept. 
\begin{definition}[Generalized gradient~\cite{Clarke1998Nonsmooth}]\label{def:def-generalized gra}
%thm8.1 in Clark
If $g:\real^d\mapsto \real$ is a locally Lipschitz continuous function, then its generalized gradient $\partial g:\real^d\mapsto \mathcal{B}(\real^d)$ at $\boldsymbol{z}\in\real^d$ is defined by 
\begin{align*}
    \partial g (\boldsymbol{z}):= \mathrm{co}\left\{\lim_{k\to\infty}\nabla g(\boldsymbol{z}_k):\boldsymbol{z}_k\to\boldsymbol{z},\boldsymbol{z}_k\notin\Omega_g\cap\mathcal{S}\right\}\,,
\end{align*}
where $\mathrm{co}$ denotes convex hull, $\Omega_g\subset\real^d$ denotes the set of points where $g$ fails to be differentiable, and $\mathcal{S}\in\real^d$ is a set of measure zero that can be arbitrarily chosen to simply the computation.
\end{definition}
%\textcolor{red}{YJ: check subgradient and generalized gradient are same?}
%by the Karush-Kuhn-Tucker conditions~\cite[Chapter 5.5.3]{Boyd2004convex}, $\boldsymbol{u}^\ast$ is the unique minimizer of \eqref{eq:opt-ss} if and only if it ensures
\begin{rem}[Relation to gradient]
    Unlike a gradient which gives a single vector, a generalized gradient is a set-valued map. The generalized gradient is the generalization of the gradient in the sense that, if $g$ is differentiable at $\boldsymbol{z}$, then $\partial g (\boldsymbol{z})=\{\nabla g (\boldsymbol{z})\}$.
\end{rem}

However, in practice, the planner's optimization problem in \eqref{eq:opt-plan} is not implementable due to the lack of knowledge of the exact disutility functions of users for privacy concerns. This poses challenges for the system operator to realize economic dispatch by solving \eqref{eq:opt-plan} directly. An important way to address this issue by the system operator is to update its power price $\boldsymbol{p}_i\in\real$ for individual users iteratively based on how users adjust their desired power. By doing so, the system operator hopes to encourage users to align their individual goals of cost minimization with the goal of problem \eqref{eq:opt-plan}. The design of such an adaptive price update will be discussed later, which is the core of this manuscript. 

\subsection{User's Optimization Problem}
All users are assumed to be rational price takers. More precisely, given the power price $\boldsymbol{p}_i$, each user $i$ adjusts its power consumption by solving the following optimization problem:
\begin{equation}\label{eq:opt-user}
		\min_{\boldsymbol{x}_i}\quad f_i(\boldsymbol{x}_i) +\boldsymbol{p}_i^T\boldsymbol{x}_i\,,
\end{equation}
which minimizes the total cost of user $i$ induced by disutility and payment for power consumption. Since \eqref{eq:opt-user} is an unconstrained convex optimization problem, the necessary and sufficient condition for $\boldsymbol{x}^*_i$ to be a minimizer is \cite[Chapter 4.2.3]{Boyd2004convex}
%\textcolor{red}{is this setting too simple? For example, we can directly set price to be a function of x given by p=dJ/dx?}
\begin{align}\label{eq:x*}
    \nabla f_i(\boldsymbol{x}^*_i)+\boldsymbol{p}_i=0\,,
\end{align}
which yields a unique global solution $\boldsymbol{x}^*_i$ by the strict convexity of $f_i(\boldsymbol{x}_i)$~\cite[Proposition 3.1.1]{Bertsekas2009Convex}. 
Basically, as the system operator updates its price signal $\boldsymbol{p}_i$, user $i$ adjusts its power demand $\boldsymbol{x}^*_i$ accordingly to satisfy \eqref{eq:x*} in a unique way. To put it another way, for any given price $\boldsymbol{p}_i$, the demand $\boldsymbol{x}^*_i$ is unique. Hence, $\boldsymbol{x}^*_i$ is clearly a function \cite[Definition 2.1]{Rudin2013PMA} of the current price $\boldsymbol{p}_i$ and can be expressed as 
\begin{align*}
\boldsymbol{x}^*_i(\boldsymbol{p}_i):= \arg \min_{\boldsymbol{x}_i} \quad f_i(\boldsymbol{x}_i) +\boldsymbol{p}_i^T\boldsymbol{x}_i  \,.
\end{align*}
An important feature of this function $\boldsymbol{x}^*_i(\boldsymbol{p}_i)$ is that it is a continuously differentiable and strictly decreasing function, which is highlighted by the following lemma. 
\begin{lem}[Bijective demand update] \label{lem:x-p-bijective}Under Assumption~\ref{ass:cost}, the demand update $\boldsymbol{x}^*_i(\boldsymbol{p}_i)$ is a bijection given by a continuously differentiable and strictly decreasing function
\begin{align}\label{eq:demand-formula}
\boldsymbol{x}^*_i(\boldsymbol{p}_i)=\nabla^{-1} f_i(-\boldsymbol{p}_i)\,,
\end{align}
which naturally has the properties that $\nabla\boldsymbol{x}^*_i(\boldsymbol{p}_i)<0$ and, $\forall \tilde{\boldsymbol{p}}_i, \hat{\boldsymbol{p}}_i \in \real$, if $\tilde{\boldsymbol{p}}_i\neq\hat{\boldsymbol{p}}_i$, then $\boldsymbol{x}^*_i(\tilde{\boldsymbol{p}}_i)\neq\boldsymbol{x}^*_i(\hat{\boldsymbol{p}}_i)$. 
\end{lem}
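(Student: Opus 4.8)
The plan is to build everything on the first-order optimality condition \eqref{eq:x*}, which the excerpt has already shown to be both necessary and sufficient for the (unique) minimizer of the user's unconstrained strictly convex problem \eqref{eq:opt-user}. Rearranging \eqref{eq:x*} gives $\nabla f_i(\boldsymbol{x}_i^*) = -\boldsymbol{p}_i$, so identifying the demand as a function of price amounts to showing that $\nabla f_i$ is invertible and setting $\boldsymbol{x}_i^*(\boldsymbol{p}_i) = (\nabla f_i)^{-1}(-\boldsymbol{p}_i)$, which is precisely formula \eqref{eq:demand-formula}. The remaining claims --- continuous differentiability, strict monotone decrease, and bijectivity --- then reduce to regularity properties of this inverse map.

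First I would establish invertibility of $\nabla f_i$. Since $f_i$ is strictly convex and twice continuously differentiable, its derivative $\nabla f_i$ is strictly increasing, hence injective; together with continuity this yields a continuous, strictly increasing inverse on the range of $\nabla f_i$, which legitimizes the notation $\nabla^{-1} f_i$ in \eqref{eq:demand-formula}. Next I would obtain smoothness and the sign of the slope in one stroke via the inverse function theorem: because $\nabla^2 f_i(\boldsymbol{x}_i) > 0$, the theorem guarantees that $(\nabla f_i)^{-1}$ is continuously differentiable with derivative $1/\nabla^2 f_i$. Composing with the affine map $\boldsymbol{p}_i \mapsto -\boldsymbol{p}_i$ and applying the chain rule gives $\nabla \boldsymbol{x}_i^*(\boldsymbol{p}_i) = -\bigl(\nabla^2 f_i(\boldsymbol{x}_i^*(\boldsymbol{p}_i))\bigr)^{-1} < 0$, so $\boldsymbol{x}_i^*$ is continuously differentiable and strictly decreasing. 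Finally, a strictly decreasing function is injective, which is exactly the stated property that distinct prices produce distinct demands, and surjectivity onto the achievable demand set follows from the invertibility of $\nabla f_i$; together these give the bijection.

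The hard part will be the bijectivity and well-posedness of the inverse rather than the differentiability bookkeeping. Strict convexity of a $C^2$ function only guarantees $\nabla^2 f_i \ge 0$ in general (e.g. $x^4$ at the origin), whereas both the $C^1$ conclusion and the strict inequality $\nabla \boldsymbol{x}_i^* < 0$ require $\nabla^2 f_i > 0$; I would therefore read Assumption~\ref{ass:cost} as supplying positive definiteness of the Hessian (equivalently, restrict to the regime where this holds), which is the standing convention in this line of work. A second subtlety is that the range of $\nabla f_i$ may be a proper subinterval of $\real$, so the ``bijection'' is onto this image and the admissible price domain should be understood accordingly; since the iterative scheme only ever queries prices in this range, this causes no loss, and I would state the bijection between the relevant price set and demand set explicitly to avoid ambiguity.
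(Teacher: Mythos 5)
Your proposal is correct and follows essentially the same route as the paper's proof: solve the optimality condition \eqref{eq:x*} for $\boldsymbol{x}^*_i$, invert the strictly increasing gradient $\nabla f_i$ to obtain \eqref{eq:demand-formula}, and apply the inverse function theorem to conclude continuous differentiability with $\nabla\boldsymbol{x}^*_i(\boldsymbol{p}_i)=-1/\nabla^2 f_i(\boldsymbol{x}^*_i(\boldsymbol{p}_i))<0$. The two subtleties you flag are in fact spots where the paper's own proof is loose---it infers $\nabla^2 f_i>0$ everywhere from strict monotonicity of $\nabla f_i$ (false in general, e.g.\ $f_i(\boldsymbol{x}_i)=\boldsymbol{x}_i^4$ at the origin) and asserts $\nabla f_i$ is a bijection of $\real$ without any surjectivity or coercivity argument---so your explicit reading of Assumption~\ref{ass:cost} as supplying a positive Hessian, and your restriction of the bijection to the range of $\nabla f_i$, tighten the argument rather than merely duplicate it.
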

\begin{proof}
First, by~\cite[Theorem~2.14]{rockafellar1998}, the strict convexity of $f_i(\boldsymbol{x}_i)$ ensures that its gradient $\nabla f_i(\boldsymbol{x}_i)$ is a strictly increasing function. Then, this strict monotonicity implies that $\nabla f_i(\boldsymbol{x}_i)$ is a bijection, which further implies that $\nabla f_i(\boldsymbol{x}_i)$ has a unique inverse function written as $\nabla^{-1} f_i$ that is also a bijection. Now, we note that, as an optimal solution, $\boldsymbol{x}^*_i(\boldsymbol{p}_i)$ must satisfy \eqref{eq:x*}, i.e., 
\begin{align}\label{eq:x*(p)}
    \nabla f_i(\boldsymbol{x}^*_i(\boldsymbol{p}_i))+\boldsymbol{p}_i=0\,.
\end{align}
Since $\nabla^{-1} f_i$ is well-defined, we are allowed to represent $\boldsymbol{x}^*_i(\boldsymbol{p}_i)$ in \eqref{eq:x*(p)} as \eqref{eq:demand-formula}, from which it is easy to see that $\boldsymbol{x}^*_i(\boldsymbol{p}_i)$ is a bijection since $\nabla^{-1} f_i$ is a bijection. 

Moreover, an important property of any bijective function is that it is one-to-one, which means that every element in the codomain is mapped to by at most one element in the domain. Thus, $\forall \tilde{\boldsymbol{p}}_i, \hat{\boldsymbol{p}}_i \in \real$, if
$\boldsymbol{x}^*_i(\tilde{\boldsymbol{p}}_i)=\boldsymbol{x}^*_i(\hat{\boldsymbol{p}}_i)$, then $\tilde{\boldsymbol{p}}_i=\hat{\boldsymbol{p}}_i$, which is logically equivalent to the contrapositive, i.e.,
$\forall \tilde{\boldsymbol{p}}_i, \hat{\boldsymbol{p}}_i \in \real$, if $\tilde{\boldsymbol{p}}_i\neq\hat{\boldsymbol{p}}_i$, then $\boldsymbol{x}^*_i(\tilde{\boldsymbol{p}}_i)\neq\boldsymbol{x}^*_i(\hat{\boldsymbol{p}}_i)$.  

%Finally, we would like to show that $\boldsymbol{x}^*_i(\boldsymbol{p}_i)$ is a continuous function. By Assumption~\ref{ass:cost}, $f_i(\boldsymbol{x}_i)$ is continuously differentiable, which implies that $\nabla f_i(\boldsymbol{x}_i)$ is continuous. Also, recall that we have shown that $\nabla f_i(\boldsymbol{x}_i)$ is a bijection, which indicates that $\nabla f_i(\boldsymbol{x}_i)$ is one-to-one. Thus, $\nabla f_i(\boldsymbol{x}_i)$ is a continuous one-to-one function, which ensures that its inverse function $\nabla^{-1} f_i$ is continuous~\cite[Theorem 4.17]{Rudin2013PMA}. The continuity of $\boldsymbol{x}^*_i(\boldsymbol{p}_i)$ follows directly from \eqref{eq:demand-formula}.
Finally, we would like to show that $\boldsymbol{x}^*_i(\boldsymbol{p}_i)$ is a continuously differentiable function. By Assumption~\ref{ass:cost}, $f_i(\boldsymbol{x}_i)$ is twice continuously differentiable, which implies that $\nabla f_i(\boldsymbol{x}_i)$ is continuously differentiable. That is, $\nabla^2 f_i(\boldsymbol{x}_i)$ exists everywhere and is continuous. As mentioned in the beginning of the proof, $\nabla f_i(\boldsymbol{x}_i)$ is strictly increasing, which implies that $\nabla^2 f_i(\boldsymbol{x}_i)>0$ everywhere. By inverse function theorem, $\nabla^{-1} f_i$ is continuously differentiable and its derivative at $(-\boldsymbol{p}_i)$ is given by $1/\nabla^2 f_i(\nabla^{-1} f_i(-\boldsymbol{p}_i))=1/\nabla^2 f_i(\boldsymbol{x}^*_i(\boldsymbol{p}_i))>0$ since $\nabla^2 f_i(\boldsymbol{x}_i)>0$ everywhere. Thus, $\nabla\boldsymbol{x}^*_i(\boldsymbol{p}_i)=-1/\nabla^2 f_i(\nabla^{-1} f_i(-\boldsymbol{p}_i))=-1/\nabla^2 f_i(\boldsymbol{x}^*_i(\boldsymbol{p}_i))<0$, which is clearly continuous since $\nabla^2 f_i(\boldsymbol{x}_i)$ is continuous. This concludes the proof that $\boldsymbol{x}^*_i(\boldsymbol{p}_i)$ is continuously differentiable and strictly decreasing.
%\textcolor{red}{YJ: need to add proof for continuously diff of x(p), for mutli-step case, hessian not necessarily postive definite!}
%\textcolor{red}{YJ: need to add proof for grad x(p)<0 used in stability!!!}
%Also, recall that we have shown that $\nabla f_i(\boldsymbol{x}_i)$ is a bijection, which indicates that $\nabla f_i(\boldsymbol{x}_i)$ is one-to-one. Thus, $\nabla f_i(\boldsymbol{x}_i)$ is a continuous one-to-one function, which ensures that its inverse function $\nabla^{-1} f_i$ is continuous~\cite[Theorem 4.17]{Rudin2013PMA}. The continuity of $\boldsymbol{x}^*_i(\boldsymbol{p}_i)$ follows directly from \eqref{eq:demand-formula}.
\end{proof}

Lemma~\ref{lem:x-p-bijective} shows that the demand update $\boldsymbol{x}^*_i(\boldsymbol{p}_i)$ is a bijective function which naturally enjoys a nice property. That is, $\forall \tilde{\boldsymbol{p}}_i, \hat{\boldsymbol{p}}_i \in \real$, if $\tilde{\boldsymbol{p}}_i\neq\hat{\boldsymbol{p}}_i$, then $\boldsymbol{x}^*_i(\tilde{\boldsymbol{p}}_i)\neq\boldsymbol{x}^*_i(\hat{\boldsymbol{p}}_i)$, which means that it is impossible for distinct price signals to motivate the same power demand. As the analysis will unfold later, this ``uniqueness" plays a role in the convergence of the pricing mechanism which we will propose.

Therefore, our goal is to design a suitable update for price profile $\vec{\boldsymbol{p}}:=\left(\boldsymbol{p}_i, i \in \mathcal{N} \right) \in \real^{n}$ which can leverage the nice demand update $\vec{\boldsymbol{x}}^*(\vec{\boldsymbol{p}}):=\left(\boldsymbol{x}^*_i(\boldsymbol{p}_i), i \in \mathcal{N} \right) \in \real^{n}$ induced by individual user's optimization problem \eqref{eq:opt-user} in each iteration to gradually gear the demand profile $\vec{\boldsymbol{x}}$ of users toward the minimizer of \eqref{eq:opt-plan} after enough iterations. This incentive pricing mechanism allows the system operator to achieve the desired solution to the planner's optimization problem \eqref{eq:opt-plan} without solving it directly.
%Our goal is to design an incentive price profile $\vec{\boldsymbol{p}}:=\left(\boldsymbol{p}_i, i \in \mathcal{N} \right) \in \real^{nT}$ that can automatically drive the demand profile of users $\vec{\boldsymbol{x}}$ to move towards the minimizer of \eqref{eq:opt-plan}.
%As just mentioned, due to privacy concerns, instead of solving the planner's optimization problem \eqref{eq:opt-plan} directly, the system operator hopes to adopt an interactive structure shown in Fig.~\ref{fig:interactive_structure} to incentivize users to change their power demand profile iteratively towards. 
\section{Adaptive Price Update under Nonsmoothness }
In terms of incentive pricing mechanism, our recent work \cite{LI2024EPSR} proposes for a similar but simpler setting to adopt price dynamics utilizing the gradient information of the system cost function $J(\cdot)$ to incentivize users to adjust their power consumption towards a point where the planner's problem and user's problem are simultaneously solved. However, the underlying assumption that $J(\cdot)$ is smooth  does not hold in our case due to the particular choice of $J(\cdot)$ as \eqref{eq:Jcostfun} which makes $J(\cdot)$ convex and locally Lipschitz but not differentiable everywhere. Thus, we propose an adaptive price update by leveraging the generalized gradient in Definition~\ref{def:def-generalized gra} as follows:
%\textcolor{red}{YJ: add concept of subdifferential and subgradient}Thus, inspired by the incentive price update in \cite{LI2024EPSR}, we propose a price update for our setting as follows: 
\begin{align}\label{eq:pupdate}   \dot{\vec{\boldsymbol{p}}}\in \partial J(\vec{\boldsymbol{x}}^*(\vec{\boldsymbol{p}})) -\vec{\boldsymbol{p}}\,.
\end{align}
This is well-defined since the fact that $J(\vec{\boldsymbol{x}})$ is a locally Lipschitz continous function ensures that $J(\vec{\boldsymbol{x}})$ has a nonempty compact set as its generalized gradient at any $\vec{\boldsymbol{x}}\in \real^{n}$~\cite[Proposition 6]{Cortes2008CSM}.
%Here, $\vec{\boldsymbol{x}}^*$ results from stacking each minimizer $\boldsymbol{x}^*_i$ to \eqref{eq:opt-user} as a tall vector, which is clearly a function of the current price $\vec{\boldsymbol{p}}$ since each $\boldsymbol{x}^*_i$ depends on $\boldsymbol{p}_i$ through \eqref{eq:x*}, i.e., $\vec{\boldsymbol{x}}^*(\vec{\boldsymbol{p}}):=\left(\boldsymbol{x}^*_i(\boldsymbol{p}_i), i \in \mathcal{N} \right) \in \real^{nT}$.\textcolor{red}{YJ: need to modify} 

Based on \eqref{eq:pupdate}, we now illustrate the incentive pricing mechanism in more detail. As shown in Fig.~\ref{fig:Incentive_mechanism}, we consider a two-timescale design of incentive pricing mechanism, where individual users solve \eqref{eq:opt-user} for $\boldsymbol{x}^*_i(\boldsymbol{p}_i)$ much faster than the system operator updates the price $\vec{\boldsymbol{p}}$ via \eqref{eq:pupdate}. This timescale separation allows users to consider the price signal $\vec{\boldsymbol{p}}$ as static when solving for $\vec{\boldsymbol{x}}^*(\vec{\boldsymbol{p}})$. Thus, following any given price $\vec{\boldsymbol{p}}$ provided by the system operator, users adjust their power consumption towards $\vec{\boldsymbol{x}}^*(\vec{\boldsymbol{p}})$ almost immediately by solving \eqref{eq:x*} and then the system operator updates the price $\vec{\boldsymbol{p}}$ according to \eqref{eq:pupdate} in response to the current demand profile $\vec{\boldsymbol{x}}^*(\vec{\boldsymbol{p}})$. It should be intuitively clear that \eqref{eq:pupdate} provides users with incentives to align their own interests with social welfare, given that adjustments to $\vec{\boldsymbol{p}}$ intend to reduce the difference between the marginal cost of the individual user quantified by $\vec{\boldsymbol{p}}$ and the marginal cost of the system operator characterized by $\partial J(\vec{\boldsymbol{x}}^*(\vec{\boldsymbol{p}}))$.%\textcolor{red}{YJ:understanding of marginal cost}

\begin{figure}[h]
\centering
\includegraphics[width=\columnwidth]{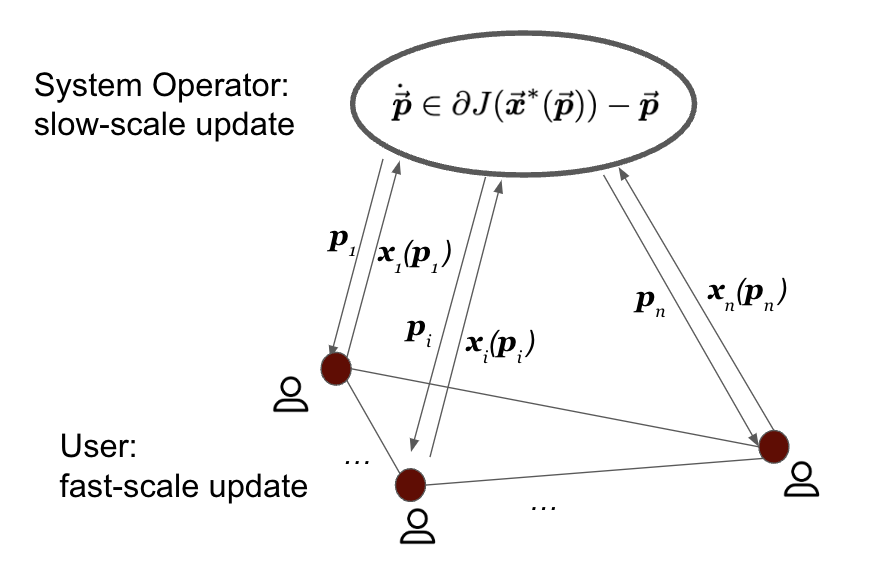}
\caption{Two-timescale design of incentive pricing mechanism.}\label{fig:Incentive_mechanism}
\end{figure}

With this in mind, as the system operator iteratively updates the price $\vec{\boldsymbol{p}}$, the nonsmooth dynamical system composed of \eqref{eq:x*} and \eqref{eq:pupdate} ideally should settle down at a point that achieves the optimal solution to the planner's optimization problem \eqref{eq:opt-plan}. That is, at the equilibrium price $\vec{\boldsymbol{p}}^\star:=\left(\boldsymbol{p}_i^\star, i \in \mathcal{N} \right) \in \real^{n}$, we would like to have $\vec{\boldsymbol{x}}^*(\vec{\boldsymbol{p}}^\star)$ satisfy \eqref{eq:optimal-planner}, which is captured by the following theorem.
\begin{thm}[Unique equilibrium with incentive aligned]\label{thm:eq-point} Under Assumption~\ref{ass:cost}, the demand profile $\vec{\boldsymbol{x}}^*(\vec{\boldsymbol{p}}^\star)$ occurring at the unique equilibrium price $\vec{\boldsymbol{p}}^\star$ of the dynamical system composed of \eqref{eq:x*} and \eqref{eq:pupdate} is the unique global minimizer to the planner's optimization problem \eqref{eq:opt-plan}, i.e., 
\begin{align}\label{eq:eq-mechanism}
   \boldsymbol{0}\in \partial J(\vec{\boldsymbol{x}}^*(\vec{\boldsymbol{p}}^\star))+\left(\nabla f_i(\boldsymbol{x}^*_i(\boldsymbol{p}_i^\star)), i \in \mathcal{N} \right)\,.
\end{align}    
\end{thm}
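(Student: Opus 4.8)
The plan is to read off the equilibria of the coupled dynamics directly from the stationarity condition and match them with the planner's optimality condition \eqref{eq:optimal-planner}. First I would impose $\dot{\vec{\boldsymbol{p}}}=\boldsymbol{0}$ in the price inclusion \eqref{eq:pupdate}. Because an equilibrium requires the set-valued right-hand side to contain the origin, this is precisely the statement
\begin{align*}
\vec{\boldsymbol{p}}^\star\in \partial J(\vec{\boldsymbol{x}}^*(\vec{\boldsymbol{p}}^\star))\,.
\end{align*}
This is the only structural fact I extract from the price dynamics; everything else is algebraic manipulation of first-order conditions.

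Next I would substitute the user response. Evaluating \eqref{eq:x*} at the equilibrium demand gives $\boldsymbol{p}_i^\star=-\nabla f_i(\boldsymbol{x}^*_i(\boldsymbol{p}_i^\star))$ for every $i\in\mathcal{N}$, i.e.\ $\vec{\boldsymbol{p}}^\star=-\left(\nabla f_i(\boldsymbol{x}^*_i(\boldsymbol{p}_i^\star)),\, i\in\mathcal{N}\right)$. Inserting this into the inclusion above and rearranging yields exactly \eqref{eq:eq-mechanism}, which coincides with the planner's optimality condition \eqref{eq:optimal-planner}. Since $C(\vec{\boldsymbol{x}})$ is strictly convex and locally Lipschitz, this inclusion is both necessary and sufficient for global optimality and the minimizer $\vec{\boldsymbol{x}}^\star$ is unique; hence any equilibrium demand profile must satisfy $\vec{\boldsymbol{x}}^*(\vec{\boldsymbol{p}}^\star)=\vec{\boldsymbol{x}}^\star$, which proves the optimality claim.

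It remains to establish existence and uniqueness of the equilibrium price itself. For existence I would run the argument backwards: starting from the unique minimizer $\vec{\boldsymbol{x}}^\star$, define $\vec{\boldsymbol{p}}^\star:=-\left(\nabla f_i(\boldsymbol{x}^\star_i),\, i\in\mathcal{N}\right)$, so that \eqref{eq:optimal-planner} immediately gives $\vec{\boldsymbol{p}}^\star\in\partial J(\vec{\boldsymbol{x}}^\star)$; Lemma~\ref{lem:x-p-bijective} (via \eqref{eq:demand-formula}) then certifies $\vec{\boldsymbol{x}}^*(\vec{\boldsymbol{p}}^\star)=\vec{\boldsymbol{x}}^\star$, so $\vec{\boldsymbol{p}}^\star$ is a genuine equilibrium. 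For uniqueness I would take any two equilibria: by the previous paragraph each of their demand profiles equals the single minimizer $\vec{\boldsymbol{x}}^\star$, and then the injectivity of the demand map guaranteed by Lemma~\ref{lem:x-p-bijective} forces the two prices to coincide.

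The routine parts are the algebraic rearrangement and the appeal to convex optimality. The one point that needs care is keeping the set-valued and single-valued pieces straight: the equilibrium condition is an inclusion $\vec{\boldsymbol{p}}^\star\in\partial J(\cdot)$, not an equality, so the conclusion cannot rest on differentiability of $J$. The main obstacle is therefore establishing uniqueness correctly, which genuinely requires both ingredients---strict convexity of $C$ to pin down a unique optimal demand, and the bijectivity of $\boldsymbol{x}^*_i(\boldsymbol{p}_i)$ from Lemma~\ref{lem:x-p-bijective} to pull that uniqueness back to the price space---since neither alone suffices.
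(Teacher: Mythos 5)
Your proposal is correct and follows essentially the same route as the paper's own proof: read off the equilibrium inclusion $\boldsymbol{0}\in\partial J(\vec{\boldsymbol{x}}^*(\vec{\boldsymbol{p}}^\star))-\vec{\boldsymbol{p}}^\star$, substitute the user first-order condition $\vec{\boldsymbol{p}}^\star=-\left(\nabla f_i(\boldsymbol{x}^*_i(\boldsymbol{p}_i^\star)),\, i\in\mathcal{N}\right)$ to recover the planner's optimality condition \eqref{eq:optimal-planner}, and obtain uniqueness of the equilibrium price by combining strict convexity of $C$ with the injectivity of the demand map from Lemma~\ref{lem:x-p-bijective} (the paper phrases this last step as a proof by contradiction, you phrase it directly, but the content is identical). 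Your explicit existence argument---constructing $\vec{\boldsymbol{p}}^\star$ from the minimizer $\vec{\boldsymbol{x}}^\star$ and verifying it is an equilibrium via \eqref{eq:demand-formula}---is a small but genuine addition, since the paper's proof establishes uniqueness and optimality but leaves existence implicit.
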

\begin{proof}
The point $\vec{\boldsymbol{p}}^\star$ is an equilibrium of the price update \eqref{eq:pupdate} if and only if ~\cite[Chapter 4.4]{Clarke1998Nonsmooth}
\begin{align}\label{eq:suff-pstar}
    \boldsymbol{0}\in \partial J(\vec{\boldsymbol{x}}^*(\vec{\boldsymbol{p}}^\star)) -\vec{\boldsymbol{p}}^\star\,.
\end{align}
Note that $\vec{\boldsymbol{x}}^*(\vec{\boldsymbol{p}}^\star)$ generated from the demand update satisfies \eqref{eq:x*}, i.e.,
\begin{align*}
    \nabla f_i(\boldsymbol{x}^*_i(\boldsymbol{p}_i^\star))+\boldsymbol{p}^\star_i=0\,,\qquad\forall i\in\mathcal{N}\,,
\end{align*}
from which we know 
\begin{align}\label{eq:x*atpstar}
    \vec{\boldsymbol{p}}^\star=-\left(\nabla f_i(\boldsymbol{x}^*_i(\boldsymbol{p}_i^\star)), i \in \mathcal{N} \right)\,.
\end{align}
Substituting \eqref{eq:x*atpstar} into \eqref{eq:suff-pstar} yields \eqref{eq:eq-mechanism},
which is exactly in the form of the optimality condition \eqref{eq:optimal-planner} for the planner's optimization problem \eqref{eq:opt-plan}. Thus, $\vec{\boldsymbol{x}}^*(\vec{\boldsymbol{p}}^\star)$ corresponding to the equilibrium price $\vec{\boldsymbol{p}}^\star$ is the unique global minimizer to \eqref{eq:opt-plan}. Now, it remains to show that the equilibrium price $\vec{\boldsymbol{p}}^\star$ is unique. By way of contradiction, suppose that both $\vec{\boldsymbol{p}}^\star$ and $\vec{\boldsymbol{p}}^\circ$ satisfy \eqref{eq:suff-pstar}, where $\vec{\boldsymbol{p}}^\star\neq\vec{\boldsymbol{p}}^\circ$. Then, by a similar argument as above, we know that both $\vec{\boldsymbol{x}}^*(\vec{\boldsymbol{p}}^\star)$ and $\vec{\boldsymbol{x}}^*(\vec{\boldsymbol{p}}^\circ)$ must satisfy the optimality condition \eqref{eq:optimal-planner}. Thus, $\vec{\boldsymbol{x}}^*(\vec{\boldsymbol{p}}^\star)$ and $\vec{\boldsymbol{x}}^*(\vec{\boldsymbol{p}}^\circ)$ are both optimizers of problem \eqref{eq:opt-plan}. Notably, by Lemma~\ref{lem:x-p-bijective}, our assumption $\vec{\boldsymbol{p}}^\star\neq\vec{\boldsymbol{p}}^\circ$ directly implies $\vec{\boldsymbol{x}}^*(\vec{\boldsymbol{p}}^\star)\neq\vec{\boldsymbol{x}}^*(\vec{\boldsymbol{p}}^\circ)$. Therefore, we now reach a situation where there are two distinct optimizers to problem \eqref{eq:opt-plan}, which contradicts the fact that problem \eqref{eq:opt-plan} has a unique minimizer. This concludes the proof of the uniqueness of the equilibrium price $\vec{\boldsymbol{p}}^\star$.
\end{proof}
Theorem~\ref{thm:eq-point} verifies that the proposed incentive pricing mechanism is guaranteed to settle down at a unique equilibrium price $\vec{\boldsymbol{p}}^\star$ whose corresponding demand profile $\vec{\boldsymbol{x}}^*(\vec{\boldsymbol{p}}^\star)$ is exactly the unique global minimizer to the planner's optimization problem \eqref{eq:opt-plan}. In other words, by adopting the proposed adaptive price update, the system operator can encourage users to align their individual benefits with the social welfare. Thus, the system objective of economic dispatch is achieved without disclosure of user privacy.

\section{Nonsmooth Stability Analysis}
Having characterized the equilibrium point and confirmed the incentive alignment at that point, we are now ready to investigate the stability of the nonsmooth dynamical system composed of the demand update \eqref{eq:x*} and the price update \eqref{eq:pupdate} by performing the natural extension of Lyapunov stability analysis provided in~\cite[Theorem 1]{Cortes2008CSM}. More precisely, the stability under the incentive pricing mechanism can be certified by finding a well-defined Lyapunov function that is decreasing along the trajectories of the system comprising~\eqref{eq:x*} and \eqref{eq:pupdate}. 
The main result of this section is presented below, whose proof is enabled by
a sequence of intermediate results that we discuss next.

\begin{thm}[Asymptotic stability]\label{thm:as-stable} Under Assumption~\ref{ass:cost}, the dynamical system composed of \eqref{eq:x*} and \eqref{eq:pupdate} is strongly asymptotically stable at the unique equilibrium characterized by \eqref{eq:eq-mechanism}. 
\end{thm}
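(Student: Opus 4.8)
The plan is to recognize the price dynamics \eqref{eq:pupdate} as a (weighted) generalized-gradient flow on the strictly convex social cost $C$ of \eqref{eq:opt-plan}, and then to certify stability with a Lyapunov function built from $C$. Concretely, I would change variables from the price $\vec{\boldsymbol{p}}$ to the induced demand $\vec{\boldsymbol{y}}:=\vec{\boldsymbol{x}}^*(\vec{\boldsymbol{p}})$, which is legitimate because Lemma~\ref{lem:x-p-bijective} shows $\vec{\boldsymbol{x}}^*$ is a continuously differentiable diffeomorphism with diagonal, strictly negative Jacobian $\nabla\vec{\boldsymbol{x}}^*=-H^{-1}$, where $H:=\mathrm{diag}(\nabla^2 f_i(\boldsymbol{y}_i))\succ 0$. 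Using \eqref{eq:x*} to write $\vec{\boldsymbol{p}}=-(\nabla f_i(\boldsymbol{y}_i),i\in\mathcal{N})$ and the generalized-gradient sum rule $\partial C=(\nabla f_i,i\in\mathcal{N})+\partial J$ (exact since each $f_i$ is smooth, as already used in \eqref{eq:optimal-planner}), the inclusion \eqref{eq:pupdate} becomes $\dot{\vec{\boldsymbol{y}}}\in -H^{-1}\partial C(\vec{\boldsymbol{y}})$. This exposes the mechanism as a preconditioned subgradient descent on $C$, whose unique stationary point $\boldsymbol 0\in\partial C(\vec{\boldsymbol{y}}^\star)$ is exactly the equilibrium of Theorem~\ref{thm:eq-point}.

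For the Lyapunov analysis I would take $V(\vec{\boldsymbol{y}}):=C(\vec{\boldsymbol{y}})-C(\vec{\boldsymbol{y}}^\star)$, which is locally Lipschitz, regular (as $C$ is convex), and positive definite about $\vec{\boldsymbol{y}}^\star$ by strict convexity and uniqueness of the minimizer of \eqref{eq:opt-plan}. The crux is to evaluate the set-valued Lie derivative $\widetilde{\mathcal{L}}V$ of $V$ along $\dot{\vec{\boldsymbol{y}}}\in -H^{-1}\partial C(\vec{\boldsymbol{y}})$. Since $\partial V=\partial C$, any admissible value $a\in\widetilde{\mathcal{L}}V(\vec{\boldsymbol{y}})$ must equal $\boldsymbol{\zeta}^T(-H^{-1}\boldsymbol{\nu})$ simultaneously for all $\boldsymbol{\zeta}\in\partial C(\vec{\boldsymbol{y}})$ and some fixed $\boldsymbol{\nu}\in\partial C(\vec{\boldsymbol{y}})$; taking $\boldsymbol{\zeta}=\boldsymbol{\nu}$ forces $a=-\boldsymbol{\nu}^T H^{-1}\boldsymbol{\nu}\le 0$. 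Because $H^{-1}\succ0$, this value is strictly negative unless $\boldsymbol{\nu}=\boldsymbol 0$, and every element of $\partial C(\vec{\boldsymbol{y}})$ is nonzero whenever $\vec{\boldsymbol{y}}\neq\vec{\boldsymbol{y}}^\star$. Hence $\max\widetilde{\mathcal{L}}V(\vec{\boldsymbol{y}})<0$ for all $\vec{\boldsymbol{y}}\neq\vec{\boldsymbol{y}}^\star$ (with the convention $\max\emptyset=-\infty$ when the Lie derivative is empty).

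With these ingredients in place, I would invoke the nonsmooth Lyapunov/LaSalle theorem \cite[Theorem 1]{Cortes2008CSM}: positive definiteness of $V$ gives strong stability, while negativity of the set-valued Lie derivative away from $\vec{\boldsymbol{y}}^\star$ forces the largest strongly invariant set contained in $\{\vec{\boldsymbol{y}}:0\in\widetilde{\mathcal{L}}V(\vec{\boldsymbol{y}})\}$ to be the singleton $\{\vec{\boldsymbol{y}}^\star\}$, giving attractivity; radial unboundedness of $V$ (inherited from coercivity of $C$) upgrades this to a global statement. Since $\vec{\boldsymbol{x}}^*$ is a homeomorphism, strong asymptotic stability in the $\vec{\boldsymbol{y}}$ coordinate transfers back to the equilibrium price $\vec{\boldsymbol{p}}^\star$ and its induced demand, which is precisely the point characterized by \eqref{eq:eq-mechanism}.

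I expect the main obstacle to be the nonsmooth bookkeeping rather than any single inequality: namely, (i) verifying that $V$ is regular and that the generalized-gradient chain and sum rules hold with equality—here the smoothness of the $f_i$ and the invertibility of $\nabla\vec{\boldsymbol{x}}^*$ from Lemma~\ref{lem:x-p-bijective} are essential—and (ii) confirming that the differential inclusion \eqref{eq:pupdate} coincides with the Filippov set-valued map required by \cite{Cortes2008CSM}, so that solution existence and the invariance principle genuinely apply. A secondary point is that the argument should be phrased so that the smooth (e.g.\ quadratic) case, in which $\partial C$ collapses to a singleton gradient and $\widetilde{\mathcal{L}}V$ reduces to the ordinary time derivative, is recovered as a special instance; this is where I would separate the linear and quadratic cost structures and verify that both yield $-\boldsymbol{\nu}^TH^{-1}\boldsymbol{\nu}<0$ away from the equilibrium.
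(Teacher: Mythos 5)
Your Lyapunov core is correct and, once the coordinate change is undone, it \emph{is} the paper's proof: your $V(\vec{\boldsymbol{y}})=C(\vec{\boldsymbol{y}})-C(\vec{\boldsymbol{y}}^\star)$ pulled back through $\vec{\boldsymbol{x}}^*$ is exactly the paper's candidate \eqref{eq:V-def}, and your key inequality $a=-\boldsymbol{\nu}^T H^{-1}\boldsymbol{\nu}<0$ is literally the paper's $a=\boldsymbol{v}^T\diag{\nabla\boldsymbol{x}^*_i(\boldsymbol{p}_i), i\in\mathcal{N}}\,\boldsymbol{v}<0$, since $\nabla\boldsymbol{x}^*_i(\boldsymbol{p}_i)=-1/\nabla^2 f_i(\boldsymbol{x}^*_i(\boldsymbol{p}_i))$ by Lemma~\ref{lem:x-p-bijective}; even the decisive trick of fixing $\boldsymbol{v}$ (your $\boldsymbol{\nu}$) in the set-valued Lie derivative and then choosing $\boldsymbol{\zeta}$ to be its matching element of $\partial V$ is the same, and your unified argument cleanly subsumes the paper's two-case split (differentiable vs.\ nondifferentiable points of $J$), making your closing worry about separating linear and quadratic costs moot. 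What the change of variables buys you is organizational: rewriting \eqref{eq:pupdate} as $\dot{\vec{\boldsymbol{y}}}\in -H^{-1}\partial C(\vec{\boldsymbol{y}})$ replaces the paper's explicit nonsmooth chain-rule computation of $\partial V(\vec{\boldsymbol{p}})$ in \eqref{eq:dV} (which needs regularity of $C$ and~\cite[Theorem 8.18]{Clason2024Nonsmooth} to hold with equality) by the trivial identity $\partial V=\partial C$, at the mild cost of justifying the trajectory-wise substitution $\dot{\vec{\boldsymbol{y}}}=\nabla\vec{\boldsymbol{x}}^*(\vec{\boldsymbol{p}})\dot{\vec{\boldsymbol{p}}}$ a.e.\ (fine, since $\vec{\boldsymbol{x}}^*$ is $C^1$) and checking that strong asymptotic stability transfers through the diffeomorphism.

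Two points need repair. First, existence of solutions is not optional bookkeeping that can be deferred: the paper devotes Lemma~\ref{lem:comp} and Lemma~\ref{lem:exist} to verifying, via~\cite[Proposition S2]{Cortes2008CSM}, that $\vec{\boldsymbol{p}}\mapsto\partial J(\vec{\boldsymbol{x}}^*(\vec{\boldsymbol{p}}))-\vec{\boldsymbol{p}}$ takes nonempty compact convex values and is upper semicontinuous and locally bounded, so that Caratheodory solutions exist from every initial condition. Note also that the framework is Caratheodory solutions of the given differential inclusion, not a Filippov regularization of a discontinuous vector field, so your item (ii) aims at the wrong condition; in your coordinates you would instead verify the same four properties for $\vec{\boldsymbol{y}}\mapsto -H(\vec{\boldsymbol{y}})^{-1}\partial C(\vec{\boldsymbol{y}})$ (continuity of $H^{-1}$ together with the u.s.c., local boundedness, and compact convexity of $\partial C$ that follow from local Lipschitzness of $C$). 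Second, your global upgrade is unjustified: strict convexity of the $f_i$ does not imply coercivity of $C$, and even coercivity of $C$ in $\vec{\boldsymbol{x}}$ would not make $V$ radially unbounded in the price variable, because the range of $\vec{\boldsymbol{x}}^*$ may be bounded. Fortunately this claim is also unnecessary: Theorem~\ref{thm:as-stable} asserts only strong asymptotic stability, which~\cite[Theorem 1]{Cortes2008CSM} delivers directly from $V(\vec{\boldsymbol{p}}^\star)=0$, $V(\vec{\boldsymbol{p}})>0$ and $\max\tilde{\mathcal{L}}V(\vec{\boldsymbol{p}})<0$ for all $\vec{\boldsymbol{p}}\neq\vec{\boldsymbol{p}}^\star$, with no LaSalle invariance or radial-unboundedness argument required.
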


Of course, before showing the stability of the system, we need to show that the dynamical system has a solution. Here, we take the solution to be in the Caratheodory sense, which roughly says that there is a trajectory that satisfies \eqref{eq:pupdate} except for a set of $t$ that has Lebesgue measure zero~\cite{Cortes2008CSM}. We do this by checking the conditions in \cite[Proposition S2]{Cortes2008CSM}. We use $\mathcal{B}(\mathbb{R}^d)$ to denote the collection of all subsets of $\mathbb{R}^d$ and $B(\mathbf{a},r)$ to denote the ball centered at $\mathbf{a}$ with radius $r$.  

% According to the stability criterion for nonsmooth dynamical systems~\cite[Theorem 1]{Cortes2008CSM}, we begin by checking sufficient conditions for the existence of a Caratheodory solution \cite[Propostion S2]{Cortes2008CSM} of the system comprising \eqref{eq:x*} and \eqref{eq:pupdate} before constructing a candidate Lyapunov function. The key here is to check whether the set-valued map $\vec{\boldsymbol{p}}\mapsto[\partial J(\vec{\boldsymbol{x}}^*(\vec{\boldsymbol{p}})) -\vec{\boldsymbol{p}}]$ in $\eqref{eq:pupdate}$ has several properties. 
Since $\partial J(\vec{\boldsymbol{x}}^*(\vec{\boldsymbol{p}}))$ involves the composition of functions, we develop the following lemma to facilitate our analysis.
\begin{lem}[Property preservation in composition]\label{lem:comp}
Assume that $h:\real^m\mapsto \real^d$ is continuous at $\boldsymbol{z}\in\real^m$ and $u$ is the composite of $h$ and $g:\real^d\mapsto \mathcal{B}(\real^d)$ defined by 
\begin{align}\label{eq:u-def}
    u(\boldsymbol{z}):=(g\circ h)(\boldsymbol{z}):=g(h(\boldsymbol{z}))\,.
\end{align}
\begin{itemize}
\item If $g:\real^d\mapsto \mathcal{B}(\real^d)$ is upper semicontinuous at $h(\boldsymbol{z})\in\real^d$, then $u$ is upper semicontinuous at $\boldsymbol{z}$. 
\item If $g:\real^d\mapsto \mathcal{B}(\real^d)$ is locally bounded at $h(\boldsymbol{z})\in\real^d$, then $u$ is locally bounded at $\boldsymbol{z}$.
\end{itemize}
\end{lem}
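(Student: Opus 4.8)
The plan is to treat both claims as standard $\varepsilon$--$\delta$ arguments that thread continuity of the inner map $h$ through the corresponding property of the outer set-valued map $g$. I would first pin down the ball-based definitions used in~\cite{Cortes2008CSM}: $g$ is upper semicontinuous at a point $\boldsymbol{y}_0$ if for every $\varepsilon>0$ there is $\eta>0$ with $g(\boldsymbol{y})\subseteq g(\boldsymbol{y}_0)+B(\boldsymbol{0},\varepsilon)$ whenever $\boldsymbol{y}\in B(\boldsymbol{y}_0,\eta)$, and $g$ is locally bounded at $\boldsymbol{y}_0$ if there exist $\eta>0$ and $M>0$ such that $\|\boldsymbol{v}\|\le M$ for all $\boldsymbol{v}\in g(\boldsymbol{y})$ and all $\boldsymbol{y}\in B(\boldsymbol{y}_0,\eta)$. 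The composition structure $u=g\circ h$ then lets me pull a neighborhood in the $h(\boldsymbol{z})$-space back to a neighborhood in the $\boldsymbol{z}$-space via continuity of $h$.

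For the upper semicontinuity claim I would start from an arbitrary $\varepsilon>0$. Applying upper semicontinuity of $g$ at $h(\boldsymbol{z})$ supplies a radius $\eta>0$ such that $g(\boldsymbol{y})\subseteq g(h(\boldsymbol{z}))+B(\boldsymbol{0},\varepsilon)=u(\boldsymbol{z})+B(\boldsymbol{0},\varepsilon)$ for every $\boldsymbol{y}\in B(h(\boldsymbol{z}),\eta)$. Next, continuity of $h$ at $\boldsymbol{z}$ yields $\delta>0$ with $h(\boldsymbol{w})\in B(h(\boldsymbol{z}),\eta)$ whenever $\boldsymbol{w}\in B(\boldsymbol{z},\delta)$. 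Chaining the two inclusions gives $u(\boldsymbol{w})=g(h(\boldsymbol{w}))\subseteq u(\boldsymbol{z})+B(\boldsymbol{0},\varepsilon)$ for all $\boldsymbol{w}\in B(\boldsymbol{z},\delta)$, which is exactly upper semicontinuity of $u$ at $\boldsymbol{z}$.

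The local boundedness claim follows the same skeleton. Local boundedness of $g$ at $h(\boldsymbol{z})$ produces $\eta>0$ and a bound $M>0$ valid on $B(h(\boldsymbol{z}),\eta)$; continuity of $h$ again provides $\delta>0$ pushing $B(\boldsymbol{z},\delta)$ into $B(h(\boldsymbol{z}),\eta)$ under $h$. Then every $\boldsymbol{w}\in B(\boldsymbol{z},\delta)$ satisfies $h(\boldsymbol{w})\in B(h(\boldsymbol{z}),\eta)$, so each $\boldsymbol{v}\in u(\boldsymbol{w})=g(h(\boldsymbol{w}))$ obeys $\|\boldsymbol{v}\|\le M$, giving local boundedness of $u$ at $\boldsymbol{z}$.

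I do not anticipate a genuine obstacle here: both parts are the set-valued analogues of the elementary fact that a composition of continuous maps is continuous. The only care required is to use the inner-map neighborhood $B(h(\boldsymbol{z}),\eta)$ consistently as the domain on which the property of $g$ is invoked, and to keep every quantifier anchored at $h(\boldsymbol{z})$ and $\boldsymbol{z}$ respectively, since the definitions are local rather than global; in particular I would never require $g$ to be well-behaved outside $B(h(\boldsymbol{z}),\eta)$. The eventual payoff is that, taking $h=\vec{\boldsymbol{x}}^*$ (continuous by Lemma~\ref{lem:x-p-bijective}) and $g=\partial J$, this lemma transfers the known upper semicontinuity and local boundedness of the generalized gradient to the composite map $\partial J(\vec{\boldsymbol{x}}^*(\cdot))$ appearing in the price update~\eqref{eq:pupdate}.
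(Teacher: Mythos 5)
Your proposal is correct and follows essentially the same route as the paper's own proof: for each claim, first invoke the property of $g$ at $h(\boldsymbol{z})$ to obtain a radius $\eta$, then use continuity of $h$ at $\boldsymbol{z}$ to pull $B(h(\boldsymbol{z});\eta)$ back to a ball $B(\boldsymbol{z};\delta)$, and chain the two via $u(\tilde{\boldsymbol{z}})=g(h(\tilde{\boldsymbol{z}}))$. The quantifier bookkeeping you flag (anchoring everything locally at $h(\boldsymbol{z})$ and $\boldsymbol{z}$) is exactly how the paper handles it, so nothing is missing.
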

\begin{proof}
We study the two cases separately.

For upper semicontinuity of a set-valued map~\cite{Cortes2008CSM}, we need to show that, $\forall \epsilon>0$, $\exists\delta>0$ such that 
\begin{align}\label{eq:upper-u}
    u(\tilde{\boldsymbol{z}})\subset u(\boldsymbol{z})+B(\boldsymbol{0};\epsilon)\,,\qquad\forall\tilde{\boldsymbol{z}}\in B(\boldsymbol{z};\delta)\,.
\end{align}

To this end, we first note that, if $g$ is upper semicontinuous at $h(\boldsymbol{z})\in\real^d$, for any given $\epsilon>0$, $\exists\eta>0$ such that, whenever $\boldsymbol{y}\in B(h(\boldsymbol{z});\eta)$, it holds that~\cite{Cortes2008CSM}
\begin{align}\label{eq:gupper}
    g(\boldsymbol{y})\subset g(h(\boldsymbol{z}))+B(\boldsymbol{0};\epsilon)\,.
\end{align}
Next, since $h$ is continuous at $\boldsymbol{z}\in\real^m$, for any given $\eta>0$, $\exists\delta>0$ such that, whenever $\tilde{\boldsymbol{z}}\in B(\boldsymbol{z};\delta)$,  it holds that~\cite[Definition 4.5]{Rudin2013PMA} 
\begin{align*}
    h(\tilde{\boldsymbol{z}})\in B(h(\boldsymbol{z});\eta)\,.
\end{align*}
Now, we combine the above two arguments by setting $\boldsymbol{y}=h(\tilde{\boldsymbol{z}})$ in \eqref{eq:gupper}, which yields  
\begin{align}\label{eq:ghupper}
    g(h(\tilde{\boldsymbol{z}}))\subset g(h(\boldsymbol{z}))+B(\boldsymbol{0};\epsilon)\,, \qquad\forall\tilde{\boldsymbol{z}}\in B(\boldsymbol{z};\delta)\,.
\end{align}
Finally, from \eqref{eq:u-def}, we know $g(h(\tilde{\boldsymbol{z}}))=u(\tilde{\boldsymbol{z}})$ and $g(h(\boldsymbol{z}))=u(\boldsymbol{z})$, which combined with \eqref{eq:ghupper} gives exactly the claim \eqref{eq:upper-u} that we would like to prove.

For local boundedness of a set-valued map~\cite{Cortes2008CSM}, we need to show that, $\exists\delta>0$ and some constant $M>0$ such that 
\begin{align}\label{eq:com-bound}
    \|\boldsymbol{\mu}\|_2\leq M\,,\qquad\forall \tilde{\boldsymbol{z}}\in B(\boldsymbol{z};\delta),\boldsymbol{\mu}\in u(\tilde{\boldsymbol{z}})\,.
\end{align}

With this aim, we first note that, if $g$ is locally bounded at $h(\boldsymbol{z})\in\real^d$, $\exists\eta>0$ and some constant $M>0$ such that~\cite{Cortes2008CSM}
\begin{align}\label{eq:bound-g}
    \|\boldsymbol{\mu}\|_2\leq M\,,\qquad \forall\boldsymbol{y}\in B(h(\boldsymbol{z});\eta), \boldsymbol{\mu}\in g(\boldsymbol{y})\,.
\end{align}
Again, since $h$ is continuous at $\boldsymbol{z}\in\real^m$, for any given $\eta>0$, $\exists\delta>0$ such that~\cite[Definition 4.5]{Rudin2013PMA}
\begin{align}\label{eq:h-con}
    h(\tilde{\boldsymbol{z}})\in B(h(\boldsymbol{z});\eta)\,,\qquad\forall\tilde{\boldsymbol{z}}\in B(\boldsymbol{z};\delta)\,.
\end{align}
Now, we combine the above two arguments by setting $\boldsymbol{y}=h(\tilde{\boldsymbol{z}})$ in \eqref{eq:bound-g}, which yields 
\begin{align*}
    \|\boldsymbol{\mu}\|_2\leq M\,,\forall\tilde{\boldsymbol{z}}\in\! B(\boldsymbol{z};\delta), h(\tilde{\boldsymbol{z}})\in\! B(h(\boldsymbol{z});\eta), \boldsymbol{\mu}\in g(h(\tilde{\boldsymbol{z}}))\,.
\end{align*}
Here, the second condition can be removed since it is directly implied by the first condition due to \eqref{eq:h-con}, which yields
\begin{align}\label{eq:bound-u}
    \|\boldsymbol{\mu}\|_2\leq M\,,\quad \forall\tilde{\boldsymbol{z}}\in B(\boldsymbol{z};\delta), \boldsymbol{\mu}\in g(h(\tilde{\boldsymbol{z}}))\,.
\end{align}
Finally, from \eqref{eq:u-def}, we know $g(h(\tilde{\boldsymbol{z}}))=u(\tilde{\boldsymbol{z}})$, which substituted into \eqref{eq:bound-u} gives exactly the claim \eqref{eq:com-bound} that we would like to prove.
\end{proof}
Lemma~\ref{lem:comp} paves us a way to show the existence of a Caratheodory solution of our dynamical system by checking conditions in~\cite[Propostion S2]{Cortes2008CSM}, which is the core of the next lemma.
\begin{lem}[Existence of a Caratheodory solution]\label{lem:exist} Under Assumption~\ref{ass:cost}, there exists a Caratheodory solution of the dynamical system composed of \eqref{eq:x*} and \eqref{eq:pupdate} for any initial condition $\vec{\boldsymbol{p}}(0)$.
\end{lem}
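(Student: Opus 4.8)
The plan is to recognize \eqref{eq:pupdate} as an autonomous differential inclusion $\dot{\vec{\boldsymbol{p}}}\in F(\vec{\boldsymbol{p}})$ with set-valued right-hand side $F(\vec{\boldsymbol{p}}):=\partial J(\vec{\boldsymbol{x}}^*(\vec{\boldsymbol{p}}))-\vec{\boldsymbol{p}}$, and then to discharge the hypotheses of \cite[Proposition S2]{Cortes2008CSM}. Those hypotheses require that $F$ take nonempty, compact, convex values and be upper semicontinuous and locally bounded on $\real^{n}$. First I would record the regularity of the generalized gradient alone: since $J$ is convex and locally Lipschitz by Assumption~\ref{ass:cost}, the map $\partial J$ is nonempty, convex, and compact valued, upper semicontinuous, and locally bounded at every point, the nonempty-compactness being exactly \cite[Proposition 6]{Cortes2008CSM}, convexity being immediate from the convex-hull form of Definition~\ref{def:def-generalized gra}, and local boundedness following from the local Lipschitz bound on $J$.

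Next I would handle the composition $\partial J\circ\vec{\boldsymbol{x}}^*$. By Lemma~\ref{lem:x-p-bijective} the demand map $h:=\vec{\boldsymbol{x}}^*$ is continuously differentiable, hence continuous, on $\real^{n}$. Setting $g:=\partial J$, Lemma~\ref{lem:comp} applies directly: upper semicontinuity of $g$ at $h(\vec{\boldsymbol{p}})$ gives upper semicontinuity of $u:=g\circ h$ at $\vec{\boldsymbol{p}}$, and local boundedness of $g$ at $h(\vec{\boldsymbol{p}})$ gives local boundedness of $u$ at $\vec{\boldsymbol{p}}$. Moreover the value $u(\vec{\boldsymbol{p}})=\partial J(\vec{\boldsymbol{x}}^*(\vec{\boldsymbol{p}}))$ is merely a value of $\partial J$ at a particular argument, so it inherits nonemptiness, compactness, and convexity. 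Since $\vec{\boldsymbol{p}}$ ranges over all of $\real^{n}$, these properties hold globally.

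Finally I would absorb the linear term $-\vec{\boldsymbol{p}}$. Subtracting the single-valued continuous map $\vec{\boldsymbol{p}}\mapsto\vec{\boldsymbol{p}}$ from a set-valued map translates each value by a vector, preserving nonemptiness, compactness, and convexity; it preserves upper semicontinuity because the sum of an upper semicontinuous set-valued map and a continuous single-valued map is upper semicontinuous; and it preserves local boundedness because a continuous map is locally bounded. Thus $F$ meets every requirement of \cite[Proposition S2]{Cortes2008CSM}, and a Caratheodory solution exists for each initial condition $\vec{\boldsymbol{p}}(0)$, completing the proof.

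I expect no deep analytic difficulty, since Lemma~\ref{lem:comp} was engineered precisely to dispose of the composition step and the properties of $\partial J$ are standard for convex locally Lipschitz $J$. The only point demanding care is the bookkeeping in the last paragraph, namely verifying that each of the five value/continuity properties survives subtraction of the continuous term $\vec{\boldsymbol{p}}$; I would state these closure facts explicitly rather than treat them as obvious, as that is where an incautious argument could slip.
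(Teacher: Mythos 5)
Your proposal is correct and follows essentially the same route as the paper's own proof: both verify the hypotheses of \cite[Proposition S2]{Cortes2008CSM} by first establishing that $\partial J$ is nonempty-compact-convex valued, upper semicontinuous, and locally bounded via \cite[Proposition 6]{Cortes2008CSM}, then handling the composition with the continuous map $\vec{\boldsymbol{x}}^*(\vec{\boldsymbol{p}})$ through Lemma~\ref{lem:comp}, and finally checking that subtracting the continuous term $\vec{\boldsymbol{p}}$ preserves all five properties. Your closing observation about stating the closure facts for the $-\vec{\boldsymbol{p}}$ term explicitly matches exactly how the paper's proof concludes, so there is nothing substantive to add.
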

\begin{proof}
Basically, by~\cite[Propostion S2]{Cortes2008CSM}, it suffices to show that the set-valued map $\vec{\boldsymbol{p}}\mapsto[\partial J(\vec{\boldsymbol{x}}^*(\vec{\boldsymbol{p}})) -\vec{\boldsymbol{p}}]$ takes nonempty compact convex values and is also upper semicontinuous as well as locally bounded\footnote{There is not need to check measurability here since \eqref{eq:pupdate} does not explicitly depend on time $t$.}.

Clearly, it is the term $\partial J(\vec{\boldsymbol{x}}^*(\vec{\boldsymbol{p}}))$ associated with the generalized gradient in the above mapping that makes our dynamics different from differential equations. Thus, we focus our analysis on properties of $\partial J(\vec{\boldsymbol{x}}^*(\vec{\boldsymbol{p}}))$, which is a composition of $\partial J(\vec{\boldsymbol{x}})$ and $\vec{\boldsymbol{x}}^*(\vec{\boldsymbol{p}})$.

   We start by investigating $\partial J(\vec{\boldsymbol{x}})$. Based on~\cite[Proposition 6]{Cortes2008CSM}, it follows directly from the local Lipschitz continuity of $J(\vec{\boldsymbol{x}})$ that $\partial J(\vec{\boldsymbol{x}})$ is a nonempty compact convex set at any $\vec{\boldsymbol{x}}$ and the set-valued map $\vec{\boldsymbol{x}}\mapsto\partial J(\vec{\boldsymbol{x}})$ is upper semicontinuous and locally bounded  at any $\vec{\boldsymbol{x}}$. 

   As for $\vec{\boldsymbol{x}}^*(\vec{\boldsymbol{p}}):=\left(\boldsymbol{x}^*_i(\boldsymbol{p}_i), i \in \mathcal{N} \right)$, it is a continuous vector-valued function since each component $\boldsymbol{x}^*_i(\boldsymbol{p}_i)$ is a continuous function by Lemma~\ref{lem:x-p-bijective}~\cite[Theorem 2.4]{Lax2017MultiCal}.

   With above information about $\partial J(\vec{\boldsymbol{x}})$ and $\vec{\boldsymbol{x}}^*(\vec{\boldsymbol{p}})$, we are now ready to examine properties of $\partial J(\vec{\boldsymbol{x}}^*(\vec{\boldsymbol{p}}))$. First, given that $\partial J(\vec{\boldsymbol{x}})$ is a nonempty compact convex set at any $\vec{\boldsymbol{x}}$, it must be true that $\partial J(\vec{\boldsymbol{x}}^*(\vec{\boldsymbol{p}}))$ is a nonempty compact convex set at any $\vec{\boldsymbol{p}}$ as well since $\vec{\boldsymbol{x}}^*(\vec{\boldsymbol{p}})$ is a bijective function by Lemma~\ref{lem:x-p-bijective}. This can be understood by noting that, no matter what particular value the price signal $\vec{\boldsymbol{p}}$ is taking, $\partial J(\vec{\boldsymbol{x}})$ will take a corresponding value $\vec{\boldsymbol{x}}=\vec{\boldsymbol{x}}^*(\vec{\boldsymbol{p}})$ at that $\vec{\boldsymbol{p}}$, which must produce a nonempty compact convex set $\partial J(\vec{\boldsymbol{x}}^*(\vec{\boldsymbol{p}}))$. Second, the upper semicontinuity and local boundedness of $\partial J(\vec{\boldsymbol{x}}^*(\vec{\boldsymbol{p}}))$ at any $\vec{\boldsymbol{p}}$ follow from Lemma~\ref{lem:comp} by setting $h=\vec{\boldsymbol{x}}^*(\vec{\boldsymbol{p}})$ which is continuous and $g=\partial J(\vec{\boldsymbol{x}})$ which is upper semi-continuous and locally bounded.

   Finally, the term $(-\vec{\boldsymbol{p}})$ has no influence to the above properties. First, it only translates the nonempty compact convex set $\partial J(\vec{\boldsymbol{x}}^*(\vec{\boldsymbol{p}}))$ by $(-\vec{\boldsymbol{p}})$, which is still a nonempty compact convex set. Thus, $\vec{\boldsymbol{p}}\mapsto[\partial J(\vec{\boldsymbol{x}}^*(\vec{\boldsymbol{p}})) -\vec{\boldsymbol{p}}]$ takes nonempty compact convex values. Second, it can be considered as a continuous function which is inherently upper semicontinuous and locally bounded at $\vec{\boldsymbol{p}}$. Since the summation of two upper semicontinuous functions is still upper semicontinuous and the summation of two locally bounded functions is still locally bounded. Thus, $\vec{\boldsymbol{p}}\mapsto[\partial J(\vec{\boldsymbol{x}}^*(\vec{\boldsymbol{p}})) -\vec{\boldsymbol{p}}]$ is also upper semicontinuous and locally bounded. The result follows from~\cite[Propostion S2]{Cortes2008CSM}.
\end{proof}
After determining the existence of a Caratheodory solution of the system from any initial point through Lemma~\ref{lem:exist}, we now examine the nonsmooth system stability by constructing a candidate Lyapunov function. We seek a function $V(\vec{\boldsymbol{p}})$ that is locally Lipschitz and regular and also satisfies $V(\vec{\boldsymbol{p}}^\star)=0$ and $V(\vec{\boldsymbol{p}})>0$, $\forall\vec{\boldsymbol{p}}\neq\vec{\boldsymbol{p}}^\star$. The monotonicity of this Lyapunov candidate along the system trajectories is more complicated compared to a standard analysis since we need to study the Lie derivative in a nonsmooth setting. 
% We then  verify its monotonic evolution along the system trajectories by the notion of Lie derivative in the nonsmooth setting. 
%the monotonic behavior along the solutions of nonsmooth dynamics

We consider the following Lyapunov function candidate:
\begin{align}  \label{eq:V-def}
V(\vec{\boldsymbol{p}}):=C(\vec{\boldsymbol{x}}^*(\vec{\boldsymbol{p}}))-C(\vec{\boldsymbol{x}}^*(\vec{\boldsymbol{p}}^\star))\,,
\end{align}
where $C(\cdot)$ denotes the objective function of the planner's optimization problem \eqref{eq:opt-plan} and $\vec{\boldsymbol{p}}^\star$ corresponds to the unique equilibrium point of the system satisfying \eqref{eq:eq-mechanism}. The next result shows that this is a well-defined Lyapunov function
candidate. 
\begin{lem}[Well-defined Lyapunov function]\label{lem:V}
Under Assumption~\ref{ass:cost},  $V(\vec{\boldsymbol{p}})$ defined in \eqref{eq:V-def} is a locally Lipschitz and regular function that satisfies
$V(\vec{\boldsymbol{p}}^\star)=0$ and $V(\vec{\boldsymbol{p}})>0$, $\forall\vec{\boldsymbol{p}}\neq\vec{\boldsymbol{p}}^\star$.
\end{lem}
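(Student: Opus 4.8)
The plan is to verify the four claimed properties separately, disposing of the three routine ones quickly and reserving the bulk of the effort for regularity. The value condition $V(\vec{\boldsymbol{p}}^\star)=0$ is immediate from the definition \eqref{eq:V-def}, since the two terms coincide when $\vec{\boldsymbol{p}}=\vec{\boldsymbol{p}}^\star$.

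For positive definiteness I would invoke Theorem~\ref{thm:eq-point}, which guarantees that $\vec{\boldsymbol{x}}^*(\vec{\boldsymbol{p}}^\star)$ is the \emph{unique} global minimizer of $C(\cdot)$ in \eqref{eq:opt-plan}. Hence $C(\vec{\boldsymbol{x}}^*(\vec{\boldsymbol{p}}))\geq C(\vec{\boldsymbol{x}}^*(\vec{\boldsymbol{p}}^\star))$ for every $\vec{\boldsymbol{p}}$, with equality precisely when $\vec{\boldsymbol{x}}^*(\vec{\boldsymbol{p}})=\vec{\boldsymbol{x}}^*(\vec{\boldsymbol{p}}^\star)$ by uniqueness of the minimizer. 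By the bijectivity of the demand map established in Lemma~\ref{lem:x-p-bijective}, this equality of demands forces $\vec{\boldsymbol{p}}=\vec{\boldsymbol{p}}^\star$; consequently $V(\vec{\boldsymbol{p}})>0$ whenever $\vec{\boldsymbol{p}}\neq\vec{\boldsymbol{p}}^\star$.

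For local Lipschitzness I would observe that $C=\sum_{i\in\mathcal{N}}f_i+J$ is locally Lipschitz, since each $f_i$ is twice continuously differentiable (hence locally Lipschitz), $J$ is convex and therefore locally Lipschitz, and a finite sum of locally Lipschitz functions is locally Lipschitz. The inner map $\vec{\boldsymbol{x}}^*(\vec{\boldsymbol{p}})$ is continuously differentiable by Lemma~\ref{lem:x-p-bijective}, hence locally Lipschitz. A composition of locally Lipschitz maps is locally Lipschitz, and subtracting the constant $C(\vec{\boldsymbol{x}}^*(\vec{\boldsymbol{p}}^\star))$ does not affect this, so $V$ is locally Lipschitz.

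The main obstacle is regularity, for which I would rely on Clarke's nonsmooth chain rule \cite{Clarke1998Nonsmooth}. First I would argue that $C$ is Clarke regular: the smooth part $\sum_{i\in\mathcal{N}} f_i$ is continuously differentiable and thus regular, the part $J$ is convex and thus regular, and regularity is preserved under finite sums. Then, since $\vec{\boldsymbol{x}}^*$ is continuously differentiable (equivalently, strictly differentiable), the chain rule for the composition of a regular function with a continuously differentiable map yields that $C\circ\vec{\boldsymbol{x}}^*$ is regular at every $\vec{\boldsymbol{p}}$, with $\partial(C\circ\vec{\boldsymbol{x}}^*)(\vec{\boldsymbol{p}})=\nabla\vec{\boldsymbol{x}}^*(\vec{\boldsymbol{p}})^T\partial C(\vec{\boldsymbol{x}}^*(\vec{\boldsymbol{p}}))$; subtracting a constant preserves regularity, completing the argument. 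The delicate point is confirming that both hypotheses of the chain rule---strict differentiability of the inner map and regularity of the outer function---hold, so that the composition inherits regularity rather than only the one-sided inclusion that holds for general Lipschitz compositions.
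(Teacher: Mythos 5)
Your proposal is correct and takes essentially the same route as the paper's proof: positive definiteness via the uniqueness of the global minimizer from Theorem~\ref{thm:eq-point} combined with the bijectivity of $\vec{\boldsymbol{x}}^*$ from Lemma~\ref{lem:x-p-bijective}, local Lipschitzness by composing locally Lipschitz maps, and regularity via the nonsmooth chain rule for a regular outer function with a continuously differentiable inner map. The only cosmetic difference is that you obtain regularity of $C$ by splitting it into the smooth sum $\sum_{i\in\mathcal{N}} f_i$ and the convex part $J$, whereas the paper notes directly that $C$ is locally Lipschitz and convex, hence regular; both reduce to the same chain-rule application.
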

\begin{proof} As discussed in Section~\ref{ssec:plan}, the entire objective function $C(\vec{\boldsymbol{x}})$ of the planner's optimization problem \eqref{eq:opt-plan} is locally Lipschitz and strictly convex, which together with the continuous differentiability of each $\boldsymbol{x}^*_i(\boldsymbol{p}_i)$ by Lemma~\ref{lem:x-p-bijective} allows us to show that $V(\vec{\boldsymbol{p}})$ is locally Lipschitz and regular. We now illustrate this in detail. 

We begin with the local Lipschitz continuity $C(\vec{\boldsymbol{x}}^*(\vec{\boldsymbol{p}}))$. 
Clearly, the continuous differentiability of each $\boldsymbol{x}^*_i(\boldsymbol{p}_i)$ implies that each $\boldsymbol{x}^*_i(\boldsymbol{p}_i)$ is locally Lipschitz~\cite[Chapter 17.2]{Hirsch2013Hirsch}.
%Clearly, $\vec{\boldsymbol{x}}^*(\vec{\boldsymbol{p}}):=\left(\boldsymbol{x}^*_i(\boldsymbol{p}_i), i \in \mathcal{N} \right)$ is a continuously differentiable vector-valued function since each component $\boldsymbol{x}^*_i(\boldsymbol{p}_i)$ is a continuously differentiable function~\cite[Theorem 2.8]{Spivak1965Calculus}, which further implies that $\vec{\boldsymbol{x}}^*(\vec{\boldsymbol{p}})$ is locally Lipschitz~\cite[Chapter 17.2]{Hirsch2013Hirsch}. 
Now, since each component of $\vec{\boldsymbol{x}}^*(\vec{\boldsymbol{p}}):=\left(\boldsymbol{x}^*_i(\boldsymbol{p}_i), i \in \mathcal{N} \right)$ is locally Lipschitz and $C(\vec{\boldsymbol{x}})$ is locally Lipschitz as well, their composition $C(\vec{\boldsymbol{x}}^*(\vec{\boldsymbol{p}}))$ is locally Lipschitz by the chain rule~\cite{Cortes2008CSM}.

We next investigate the regularity of $C(\vec{\boldsymbol{x}}^*(\vec{\boldsymbol{p}}))$. First of all, $\vec{\boldsymbol{x}}^*(\vec{\boldsymbol{p}}):=\left(\boldsymbol{x}^*_i(\boldsymbol{p}_i), i \in \mathcal{N} \right)$ is a continuously differentiable vector-valued function since each component $\boldsymbol{x}^*_i(\boldsymbol{p}_i)$ is a continuously differentiable function~\cite[Theorem 2.8]{Spivak1965Calculus}. Moreover, $C(\vec{\boldsymbol{x}})$ is locally Lipschitz and strictly convex, which further ensures that $C(\vec{\boldsymbol{x}})$ is regular~\cite[Proposition 2.4.3]{Clarke1998Nonsmooth}. By~\cite[Theorem 8.18]{Clason2024Nonsmooth}, as a composite of $\vec{\boldsymbol{x}}^*(\vec{\boldsymbol{p}})$ and $C(\vec{\boldsymbol{x}})$, $C(\vec{\boldsymbol{x}}^*(\vec{\boldsymbol{p}}))$ is regular. Hence, $V(\vec{\boldsymbol{p}})$ is locally Lipschitz and regular since the other term in $V(\vec{\boldsymbol{p}})$ in \eqref{eq:V-def} is a constant.

Clearly, $V(\vec{\boldsymbol{p}}^\star)=0$ by construction. To see why $V(\vec{\boldsymbol{p}})>0$, $\forall\vec{\boldsymbol{p}}\neq\vec{\boldsymbol{p}}^\star$, we first note that $\vec{\boldsymbol{x}}^*(\vec{\boldsymbol{p}}^\star)$ is the unique global minimizer to problem \eqref{eq:opt-plan} by Theorem~\ref{thm:eq-point}, which means that
\begin{align}\label{eq:uni-xstar-ineq}
\forall\vec{\boldsymbol{x}}\neq\vec{\boldsymbol{x}}^*(\vec{\boldsymbol{p}}^\star)\,,\qquad C(\vec{\boldsymbol{x}})>C(\vec{\boldsymbol{x}}^*(\vec{\boldsymbol{p}}^\star))\,.   
\end{align}
% \begin{align}\label{eq:uni-xstar-ineq}
% \sum_{i \in \mathcal{N}} f_i(\boldsymbol{x}_i) +J(\vec{\boldsymbol{x}})>\sum_{i \in \mathcal{N}} f_i(\boldsymbol{x}^*_i(\boldsymbol{p}_i^\star)) +J(\vec{\boldsymbol{x}}^*(\vec{\boldsymbol{p}}^\star))\,.   
% \end{align}
Moreover, we know from Lemma~\ref{lem:x-p-bijective} that, $\forall\vec{\boldsymbol{p}}\neq\vec{\boldsymbol{p}}^\star$, it holds that $\vec{\boldsymbol{x}}^*(\vec{\boldsymbol{p}})\neq\vec{\boldsymbol{x}}^*(\vec{\boldsymbol{p}}^\star)$. Therefore, setting $\vec{\boldsymbol{x}}=\vec{\boldsymbol{x}}^*(\vec{\boldsymbol{p}})$ in \eqref{eq:uni-xstar-ineq}, we get $C(\vec{\boldsymbol{x}}^*(\vec{\boldsymbol{p}}))>C(\vec{\boldsymbol{x}}^*(\vec{\boldsymbol{p}}^\star))$, i.e., $C(\vec{\boldsymbol{x}}^*(\vec{\boldsymbol{p}}))-C(\vec{\boldsymbol{x}}^*(\vec{\boldsymbol{p}}^\star))>0$,
% \begin{align*}
% C(\vec{\boldsymbol{x}}^*(\vec{\boldsymbol{p}}))>C(\vec{\boldsymbol{x}}^*(\vec{\boldsymbol{p}}^\star))\,,  
% \end{align*}
% \begin{align*}
% \sum_{i \in \mathcal{N}}\! f_i(\boldsymbol{x}^*_i(\boldsymbol{p}_i)) +J(\vec{\boldsymbol{x}}^*(\vec{\boldsymbol{p}}))>\!\sum_{i \in \mathcal{N}}\! f_i(\boldsymbol{x}^*_i(\boldsymbol{p}_i^\star)) +J(\vec{\boldsymbol{x}}^*(\vec{\boldsymbol{p}}^\star))\,,  
% \end{align*}
which is equivalent to 
$V(\vec{\boldsymbol{p}})>0$ by our construction of $V(\vec{\boldsymbol{p}})$ in \eqref{eq:V-def}. This confirms that $V(\vec{\boldsymbol{p}})>0$,
$\forall\vec{\boldsymbol{p}}\neq\vec{\boldsymbol{p}}^\star$, as desired.
\end{proof}
% Provided that $V(\vec{\boldsymbol{p}})$ in \eqref{eq:V-def} is a well-defined Lyapunov function candidate, 
Next, we  turn to verify the monotonic evolution of $V(\vec{\boldsymbol{p}})$ along the system trajectories given by the notion of Lie derivative in the nonsmooth setting, which requires $\max \tilde{\mathcal{L}}V(\vec{\boldsymbol{p}})<0$, $\forall\vec{\boldsymbol{p}}\neq\vec{\boldsymbol{p}}^\star$, with $\tilde{\mathcal{L}}V(\vec{\boldsymbol{p}})$ being the set-valued Lie derivative of $V$ regarding $[\partial J(\vec{\boldsymbol{x}}^*(\vec{\boldsymbol{p}})) -\vec{\boldsymbol{p}}]$ in \eqref{eq:pupdate} at $\vec{\boldsymbol{p}}$ defined by~\cite{Cortes2008CSM, Shevitz1994tac}
\begin{align}\label{eq:def-lieV}
   \tilde{\mathcal{L}}V(\vec{\boldsymbol{p}}):=&\bigl\{a\in \real: \exists \boldsymbol{v}\in\partial J(\vec{\boldsymbol{x}}^*(\vec{\boldsymbol{p}})) -\vec{\boldsymbol{p}} \text{\ such that\ }\nonumber\\&\ \ \boldsymbol{\zeta}^T \boldsymbol{v}=a, \forall \boldsymbol{\zeta}\in \partial V(\vec{\boldsymbol{p}})\bigr\}\nonumber\\
=&\cap_{\boldsymbol{\zeta}\in \partial V(\vec{\boldsymbol{p}})} \boldsymbol{\zeta}^T [\partial J(\vec{\boldsymbol{x}}^*(\vec{\boldsymbol{p}})) -\vec{\boldsymbol{p}}]\,.   
\end{align}
\begin{lem}[Negativity of Lie derivative] \label{lem:lie}Under Assumption~\ref{ass:cost}, the set-valued Lie derivative of $V(\vec{\boldsymbol{p}})$ defined in \eqref{eq:def-lieV} satisfies $\max \tilde{\mathcal{L}}V(\vec{\boldsymbol{p}})<0$, $\forall\vec{\boldsymbol{p}}\neq\vec{\boldsymbol{p}}^\star$.
\end{lem}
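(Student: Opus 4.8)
The plan is to compute the generalized gradient $\partial V(\vec{\boldsymbol{p}})$ explicitly and then show that every scalar in the set-valued Lie derivative collapses to a negative-definite quadratic form. First I would apply the chain rule for generalized gradients: since $\vec{\boldsymbol{x}}^*(\vec{\boldsymbol{p}})$ is continuously differentiable by Lemma~\ref{lem:x-p-bijective} and $C(\cdot)$ is locally Lipschitz and regular (as shown inside the proof of Lemma~\ref{lem:V}), the composite $V$ is regular and the chain rule holds with equality,
\begin{align*}
\partial V(\vec{\boldsymbol{p}}) = [\nabla \vec{\boldsymbol{x}}^*(\vec{\boldsymbol{p}})]^{T}\,\partial C(\vec{\boldsymbol{x}}^*(\vec{\boldsymbol{p}}))\,.
\end{align*}
Because the demand map is decoupled across users, the Jacobian $\nabla\vec{\boldsymbol{x}}^*(\vec{\boldsymbol{p}})$ is the diagonal matrix $D := \diag{\nabla \boldsymbol{x}^*_i(\boldsymbol{p}_i)}$, which is symmetric and, by $\nabla \boldsymbol{x}^*_i(\boldsymbol{p}_i) < 0$ from Lemma~\ref{lem:x-p-bijective}, negative definite.

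Next I would identify the outer factor $\partial C(\vec{\boldsymbol{x}}^*(\vec{\boldsymbol{p}}))$ with the right-hand side of the price dynamics. The user stationarity condition \eqref{eq:x*(p)} gives $\left(\nabla f_i(\boldsymbol{x}^*_i(\boldsymbol{p}_i)), i \in \mathcal{N}\right) = -\vec{\boldsymbol{p}}$, so by the sum rule used in \eqref{eq:optimal-planner},
\begin{align*}
\partial C(\vec{\boldsymbol{x}}^*(\vec{\boldsymbol{p}})) = \left(\nabla f_i(\boldsymbol{x}^*_i(\boldsymbol{p}_i)), i \in \mathcal{N}\right) + \partial J(\vec{\boldsymbol{x}}^*(\vec{\boldsymbol{p}})) = \partial J(\vec{\boldsymbol{x}}^*(\vec{\boldsymbol{p}})) - \vec{\boldsymbol{p}}\,,
\end{align*}
which is exactly the set $F := \partial J(\vec{\boldsymbol{x}}^*(\vec{\boldsymbol{p}})) - \vec{\boldsymbol{p}}$ governing the flow in \eqref{eq:pupdate}. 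Thus both the vector field and the outer factor of $\partial V$ are the same set $F$, and $\partial V(\vec{\boldsymbol{p}}) = D F$.

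With these two facts I would evaluate a generic $a \in \tilde{\mathcal{L}}V(\vec{\boldsymbol{p}})$. By \eqref{eq:def-lieV} there is some $\boldsymbol{v} \in F$ with $\boldsymbol{\zeta}^T \boldsymbol{v} = a$ for every $\boldsymbol{\zeta} \in \partial V(\vec{\boldsymbol{p}}) = D F$. The trick is that $D\boldsymbol{v}$ is itself an admissible $\boldsymbol{\zeta}$ (it lies in $DF$), so taking $\boldsymbol{\zeta} = D\boldsymbol{v}$ and using symmetry of $D$ yields $a = (D\boldsymbol{v})^T \boldsymbol{v} = \boldsymbol{v}^T D \boldsymbol{v}$. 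It then remains to show $\boldsymbol{v} \neq \boldsymbol{0}$ whenever $\vec{\boldsymbol{p}} \neq \vec{\boldsymbol{p}}^\star$: the membership $\boldsymbol{0} \in F$ is precisely the optimality condition \eqref{eq:eq-mechanism}, which by Theorem~\ref{thm:eq-point} together with the bijectivity in Lemma~\ref{lem:x-p-bijective} holds only at $\vec{\boldsymbol{p}}^\star$. Hence $\boldsymbol{0} \notin F$, so every $\boldsymbol{v} \in F$ is nonzero and negative definiteness of $D$ gives $a = \boldsymbol{v}^T D \boldsymbol{v} < 0$. Since this holds for each element, $\max \tilde{\mathcal{L}}V(\vec{\boldsymbol{p}}) < 0$, with the standard convention $\max \emptyset = -\infty$ covering the case where the Lie-derivative set is empty.

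The main obstacle I anticipate is the rigorous justification of the chain-rule \emph{equality} rather than a mere inclusion: Clarke's chain rule only guarantees $\partial V(\vec{\boldsymbol{p}}) \subseteq [\nabla \vec{\boldsymbol{x}}^*(\vec{\boldsymbol{p}})]^{T}\partial C(\vec{\boldsymbol{x}}^*(\vec{\boldsymbol{p}}))$ in general, and the reverse inclusion requires the regularity of $C$ alongside continuous differentiability of the inner map, both of which are supplied by Lemma~\ref{lem:V} and Lemma~\ref{lem:x-p-bijective} but must be cited carefully. The other delicate point is recognizing that the set-valued Lie derivative needs only one admissible $\boldsymbol{\zeta}$ to pin down the value $a$, so selecting $\boldsymbol{\zeta} = D\boldsymbol{v}$ is the step that converts the intersection-type condition in \eqref{eq:def-lieV} into a single definite quadratic form; once this is in place, negativity is immediate.
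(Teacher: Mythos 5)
Your proposal is correct and follows essentially the same route as the paper's proof: chain rule with equality via regularity of $C$ and smoothness of $\vec{\boldsymbol{x}}^*$, the identification $\partial C(\vec{\boldsymbol{x}}^*(\vec{\boldsymbol{p}})) = \partial J(\vec{\boldsymbol{x}}^*(\vec{\boldsymbol{p}})) - \vec{\boldsymbol{p}}$ via user stationarity, the selection $\boldsymbol{\zeta} = D\boldsymbol{v}$ to collapse the Lie derivative to $\boldsymbol{v}^T D \boldsymbol{v} < 0$, and the conventions for the empty set. The only difference is cosmetic: the paper splits into the cases $\vec{\boldsymbol{p}} \notin \Omega_{J(\vec{\boldsymbol{x}}^*(\vec{\boldsymbol{p}}))}$ (where $\partial J$ is a singleton) and $\vec{\boldsymbol{p}} \in \Omega_{J(\vec{\boldsymbol{x}}^*(\vec{\boldsymbol{p}}))}$, whereas your unified argument handles both at once, since the singleton case is subsumed by your generic treatment.
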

\begin{proof}
Before delving into $\tilde{\mathcal{L}}V(\vec{\boldsymbol{p}})$, we need to characterize $\partial V(\vec{\boldsymbol{p}})$ which can be computed as
\begin{align}\label{eq:dV}
\!\!\!\partial V(\vec{\boldsymbol{p}})=&\ \partial (C\circ \vec{\boldsymbol{x}}^*)(\vec{\boldsymbol{p}})\nonumber\\
\stackrel{\circled{1}}{=}&\ \nabla \vec{\boldsymbol{x}}^*(\vec{\boldsymbol{p}})\partial C(\vec{\boldsymbol{x}}^*(\vec{\boldsymbol{p}}))\nonumber\\ :=&\left\{\nabla \vec{\boldsymbol{x}}^*(\vec{\boldsymbol{p}})\boldsymbol{\eta}:\boldsymbol{\eta}\in\partial C(\vec{\boldsymbol{x}}^*(\vec{\boldsymbol{p}}))\right\}\nonumber\\
=&\left\{\diag{\nabla \boldsymbol{x}^*_i(\boldsymbol{p}_i), i \!\in\! \mathcal{N} } \boldsymbol{\eta}:\boldsymbol{\eta}\!\in\!\partial C(\vec{\boldsymbol{x}}^*(\vec{\boldsymbol{p}}))\right\}\!\,.
\end{align}
% The equality in \circled{1} follows by 
% it is not hard to see from the definition of $V(\vec{\boldsymbol{p}})$ in \eqref{eq:V-def} that the constant term has no impact on the result here.
In \circled{1}, the chain rule of the generalized gradient~\cite[Theorem 8.18]{Clason2024Nonsmooth} can be used with equality since, as discussed in the proof of Lemma~\ref{lem:V}, the conditions that $C(\vec{\boldsymbol{x}})$ is locally Lipschitz and regular and that $\vec{\boldsymbol{x}}^*(\vec{\boldsymbol{p}})$ is continuously differentiable both hold.

To get a more explicit expression for \eqref{eq:dV}, we now derive $\partial C(\vec{\boldsymbol{x}}^*(\vec{\boldsymbol{p}}))$ as  
\begin{align}\label{eq:dC}
    \partial C(\vec{\boldsymbol{x}}^*(\vec{\boldsymbol{p}}))=&\
\partial\left(\sum_{i \in \mathcal{N}} f_i(\boldsymbol{x}^*_i(\boldsymbol{p}_i)) +J(\vec{\boldsymbol{x}}^*(\vec{\boldsymbol{p}}))\right)\nonumber\\=&\left(\nabla f_i(\boldsymbol{x}^*_i(\boldsymbol{p}_i)), i \in \mathcal{N} \right)+\partial J(\vec{\boldsymbol{x}}^*(\vec{\boldsymbol{p}}))\nonumber\\=&\ \partial J(\vec{\boldsymbol{x}}^*(\vec{\boldsymbol{p}}))-\vec{\boldsymbol{p}}\,,
\end{align}
where the first equality is due to the definition of $C(\vec{\boldsymbol{x}})$ in \eqref{eq:opt-plan}, the second equality is is due to the sum rule of the generalized gradient for convex functions~\cite[Chapter 2.4]{Clarke1998Nonsmooth} as mentioned in Section~\ref{ssec:plan}, the last equality uses the relation that $\left(\nabla f_i(\boldsymbol{x}^*_i(\boldsymbol{p}_i)), i \in \mathcal{N} \right)=-\vec{\boldsymbol{p}}$ resulting from the optimality condition \eqref{eq:x*(p)} of user's problem as discussed in the proof of Lemma~\ref{lem:x-p-bijective}. 

Substituting \eqref{eq:dC} to \eqref{eq:dV} yields
\begin{align}\label{eq:dV-2}
\partial V(\vec{\boldsymbol{p}})=\bigl\{\diag{\nabla \boldsymbol{x}^*_i(\boldsymbol{p}_i), &i \in \mathcal{N} } \boldsymbol{\eta}:\nonumber\\&\boldsymbol{\eta}\in\partial J(\vec{\boldsymbol{x}}^*(\vec{\boldsymbol{p}}))-\vec{\boldsymbol{p}}\bigr\}\,,  
\end{align}
which will be applied to \eqref{eq:def-lieV} for investigating the sign of $\max \tilde{\mathcal{L}}V(\vec{\boldsymbol{p}})$. The challenge part is that $J(\cdot)$ is continuous but not differentiable everywhere, which means that there exists a set of points of $\vec{\boldsymbol{p}}$ for which $J(\cdot)$ fails to be differentiable at the corresponding $\vec{\boldsymbol{x}}^*(\vec{\boldsymbol{p}})$. For the ease of notation, we denote such a set of $\vec{\boldsymbol{p}}$ as $\Omega_{ J (\vec{\boldsymbol{x}}^*(\vec{\boldsymbol{p}}))}\subset\real^{n}$. Note that we only care about points $\vec{\boldsymbol{p}}$ different from the equilibrium $\vec{\boldsymbol{p}}^\star$ in this particular analysis, which are $\vec{\boldsymbol{p}}$ satisfying
\begin{align}\label{eq:suff-non-pstar}
    \boldsymbol{0}\notin \partial J(\vec{\boldsymbol{x}}^*(\vec{\boldsymbol{p}})) -\vec{\boldsymbol{p}}
\end{align}
by \eqref{eq:suff-pstar} in the proof of Theorem~\ref{thm:eq-point}.
This allows us to consider two cases based on whether $\vec{\boldsymbol{p}}\neq\vec{\boldsymbol{p}}^\star$ is in $\Omega_{ J (\vec{\boldsymbol{x}}^*(\vec{\boldsymbol{p}}))}$ or not. 
\begin{enumerate}
    \item If $\vec{\boldsymbol{p}}\neq\vec{\boldsymbol{p}}^\star$ and $\vec{\boldsymbol{p}}\notin\Omega_{ J (\vec{\boldsymbol{x}}^*(\vec{\boldsymbol{p}}))}$: The generalized gradient $\partial J(\vec{\boldsymbol{x}}^*(\vec{\boldsymbol{p}}))$ reduces to a singleton, i.e., 
    \begin{align}\label{eq:dJ-single}
        \partial J(\vec{\boldsymbol{x}}^*(\vec{\boldsymbol{p}}))=\left\{\nabla J(\vec{\boldsymbol{x}}^*(\vec{\boldsymbol{p}}))\right\}\,.
    \end{align}
    Thus, $\partial V(\vec{\boldsymbol{p}})$ in \eqref{eq:dV-2} reduces to a singleton as well, i.e.,   
\begin{align} \label{eq:dV-diff}
\partial V(\vec{\boldsymbol{p}})=\bigl\{\diag{\nabla \boldsymbol{x}^*_i(\boldsymbol{p}_i), i \in \mathcal{N} }F(\vec{\boldsymbol{p}})\bigr\} 
\end{align}

with $$F(\vec{\boldsymbol{p}})=\nabla J(\vec{\boldsymbol{x}}^*(\vec{\boldsymbol{p}}))-\vec{\boldsymbol{p}}$$
in this case, which together with \eqref{eq:dJ-single} simplifies \eqref{eq:def-lieV} to
\begin{align} \label{eq:LV-case1}
&\tilde{\mathcal{L}}V(\vec{\boldsymbol{p}})\nonumber\\=&\bigl\{a\in \real: \exists \boldsymbol{v}\!\in\!\left\{F(\vec{\boldsymbol{p}}) \right\} \text{such that\ }\boldsymbol{\zeta}^T \boldsymbol{v}=a, \nonumber\\&\ \ \forall \boldsymbol{\zeta}\in \bigl\{\diag{\nabla \boldsymbol{x}^*_i(\boldsymbol{p}_i), i \in \mathcal{N} }F(\vec{\boldsymbol{p}})\bigr\}\bigr\} \nonumber
 \\=&\bigl\{\left(\diag{\nabla \boldsymbol{x}^*_i(\boldsymbol{p}_i), i \in \mathcal{N} }F(\vec{\boldsymbol{p}})\right)^T F(\vec{\boldsymbol{p}})\bigr\}\nonumber\\
=&\bigl\{F(\vec{\boldsymbol{p}})^T\diag{(\nabla \boldsymbol{x}^*_i(\boldsymbol{p}_i)), i \in \mathcal{N} }F(\vec{\boldsymbol{p}})\bigr\},
\end{align}

We claim that 
\begin{align} \label{eq:LVvalue-case1}
F(\vec{\boldsymbol{p}})^T\diag{(\nabla \boldsymbol{x}^*_i(\boldsymbol{p}_i)), i \in \mathcal{N} }F(\vec{\boldsymbol{p}})<0  
\end{align}
for two reasons. First, $\diag{(\nabla \boldsymbol{x}^*_i(\boldsymbol{p}_i)), i \in \mathcal{N} }\prec0$ since each $\nabla \boldsymbol{x}^*_i(\boldsymbol{p}_i)<0$ by Lemma~\ref{lem:x-p-bijective}. %\textcolor{red}{YJ: need to check for multi-step case, need add this in lemma} 
Second, since $\vec{\boldsymbol{p}}\neq\vec{\boldsymbol{p}}^\star$, we know from \eqref{eq:suff-non-pstar} that $F(\vec{\boldsymbol{p}})\neq \boldsymbol{0}$. Then, \eqref{eq:LVvalue-case1} follows directly. Combining \eqref{eq:LV-case1} and \eqref{eq:LVvalue-case1}, we know 
\begin{align}\label{eq:max-LV-case1}
\max \tilde{\mathcal{L}}V(\vec{\boldsymbol{p}})=&\ F(\vec{\boldsymbol{p}})^T\diag{(\nabla \boldsymbol{x}^*_i(\boldsymbol{p}_i)), i \in \mathcal{N} }F(\vec{\boldsymbol{p}})\nonumber\\<&\ 0 \,.   
\end{align}
    \item If $\vec{\boldsymbol{p}}\neq\vec{\boldsymbol{p}}^\star$ and $\vec{\boldsymbol{p}}\in\Omega_{ J (\vec{\boldsymbol{x}}^*(\vec{\boldsymbol{p}}))}$: Substituting \eqref{eq:dV-2} to \eqref{eq:def-lieV} yields
\begin{align}\label{eq:LV-case2}
 &\tilde{\mathcal{L}}V(\vec{\boldsymbol{p}})\\=&\bigl\{a\in \real: \exists \boldsymbol{v}\!\in\!\partial J(\vec{\boldsymbol{x}}^*(\vec{\boldsymbol{p}})) -\vec{\boldsymbol{p}} \text{\ such that\ }\boldsymbol{\zeta}^T \boldsymbol{v}=a, \nonumber\\&\ \ \forall \boldsymbol{\zeta}\!\in \!\!\bigl\{\diag{\nabla \boldsymbol{x}^*_i(\boldsymbol{p}_i), i \!\in \!\mathcal{N} } \boldsymbol{\eta}\!:\boldsymbol{\eta}\!\in\!\partial J(\vec{\boldsymbol{x}}^*(\vec{\boldsymbol{p}}))\!-\!\vec{\boldsymbol{p}}\bigr\} \!\,.\nonumber
\end{align}
If $\tilde{\mathcal{L}}V(\vec{\boldsymbol{p}})=\emptyset$, then $\max \tilde{\mathcal{L}}V(\vec{\boldsymbol{p}})=-\infty$ by convention~\cite{Cortes2008CSM}. If $\tilde{\mathcal{L}}V(\vec{\boldsymbol{p}})\neq\emptyset$,
we claim that, $\forall a\in\tilde{\mathcal{L}}V(\vec{\boldsymbol{p}})$ in \eqref{eq:LV-case2}, $a<0$.  To see this, we first note that, for any such $a$, $\exists \boldsymbol{v}\!\in\!\partial J(\vec{\boldsymbol{x}}^*(\vec{\boldsymbol{p}})) -\vec{\boldsymbol{p}}$, such that  $\boldsymbol{\zeta}^T \boldsymbol{v}=a$, $\forall \boldsymbol{\zeta}\!\in \!\!\bigl\{\diag{\nabla \boldsymbol{x}^*_i(\boldsymbol{p}_i), i \!\in \!\mathcal{N} } \boldsymbol{\eta}\!:\boldsymbol{\eta}\!\in\!\partial J(\vec{\boldsymbol{x}}^*(\vec{\boldsymbol{p}}))\!-\!\vec{\boldsymbol{p}}\bigr\}$. Clearly, we can pick $\boldsymbol{\eta}=\boldsymbol{v}$ and then $\boldsymbol{\zeta}=\diag{\nabla \boldsymbol{x}^*_i(\boldsymbol{p}_i), i \!\in \!\mathcal{N} }\boldsymbol{v}$ to solve for 
\begin{align*}
a=\boldsymbol{\zeta}^T \boldsymbol{v}=(\diag{\nabla \boldsymbol{x}^*_i(\boldsymbol{p}_i), i \!\in \!\mathcal{N} }\boldsymbol{v})^T\boldsymbol{v}\nonumber\\ =\boldsymbol{v}^T\diag{(\nabla \boldsymbol{x}^*_i(\boldsymbol{p}_i)), i \!\in \!\mathcal{N} }\boldsymbol{v} <0\,.  
\end{align*}
Here, the inequality is due to a similar argument for the previous case. That is,  $\diag{(\nabla \boldsymbol{x}^*_i(\boldsymbol{p}_i)), i \in \mathcal{N} }\prec0$ and $\boldsymbol{v}\neq \boldsymbol{0}$ from \eqref{eq:suff-non-pstar}. Now, we have shown that, if $\tilde{\mathcal{L}}V(\vec{\boldsymbol{p}})\neq\emptyset$, $\forall a\in\tilde{\mathcal{L}}V(\vec{\boldsymbol{p}})$ in \eqref{eq:LV-case2}, $a<0$, which ensures that $\max \tilde{\mathcal{L}}V(\vec{\boldsymbol{p}})<0$. Therefore, no matter whether $\tilde{\mathcal{L}}V(\vec{\boldsymbol{p}})$ is empty or not, it must be true that $\max \tilde{\mathcal{L}}V(\vec{\boldsymbol{p}})<0$.
%Since we are considering the case where $\vec{\boldsymbol{p}}\in\Omega_{ J (\vec{\boldsymbol{x}}^*(\vec{\boldsymbol{p}}))}$, $\partial J(\vec{\boldsymbol{x}}^*(\vec{\boldsymbol{p}}))$ contains at least two distinct elements, so does $[\partial J(\vec{\boldsymbol{x}}^*(\vec{\boldsymbol{p}}))\!-\!\vec{\boldsymbol{p}}]$. This allows us to pick two elements from $[\partial J(\vec{\boldsymbol{x}}^*(\vec{\boldsymbol{p}}))\!-\!\vec{\boldsymbol{p}}]$. Thus, besides $\boldsymbol{v}$,  By our assumption, one of them  denoted by $\tilde{\boldsymbol{\eta}}$ and $\hat{\boldsymbol{\eta}}$, respectively
\end{enumerate}
In sum, $\max \tilde{\mathcal{L}}V(\vec{\boldsymbol{p}})<0$, $\forall\vec{\boldsymbol{p}}\neq\vec{\boldsymbol{p}}^\star$.
\end{proof}
According to~\cite[Theorem 1]{Cortes2008CSM}, with the aid of Lemmas~\ref{lem:exist}, \ref{lem:V}, and \ref{lem:lie}, we now have all the elements necessary to establish the strongly asymptotical stability of the unique equilibrium as summarized in Theorem~\ref{thm:as-stable}.

\section{Numerical Illustrations}
In this section, we present simulation results to show the convergence of our proposed incentive pricing mechanism to the desired optimal solution to the global social welfare problem. The simulations are conducted on the IEEE 14-bus system, which contains $5$ generators and $20$ transmission lines. The generation cost of each generator is assumed to be linear as in \eqref{eq:Jcostfun}, with cost coefficient $c_i$ generated uniformly at random from $[5,20]$. We assume that there is one user on each bus and each user $i$ has disutility $f_i(\boldsymbol{x}_i)=(\boldsymbol{x}_i-\bar{\boldsymbol{x}}_i)^2$, where $\bar{\boldsymbol{x}}_i$ is some constant, representing for example, the targeted consumption of user $i$. 

In order to achieve the optimal solution to the global social welfare problem \eqref{eq:opt-plan} without solving it directly, we randomly initialize the price signal for individual users from $[5,15]$ and run our proposed incentive pricing mechanism, whose dynamics together with the evolution of user demand profile are provided in Fig.~\ref{fig:p_x-t}. Obviously, both price signals $\vec{\boldsymbol{p}}$ and user demands $\vec{\boldsymbol{p}}$ converge very fast. Particularly, $\vec{\boldsymbol{p}}$ successfully converges to the optimal solution of problem \eqref{eq:opt-plan}. Note that the cost are each bus are converges to one of a few values, which is typical when a small number of lines are congested~\cite{zhang2014network}. 
\begin{figure}[ht]
\centering
\includegraphics[width=\columnwidth]{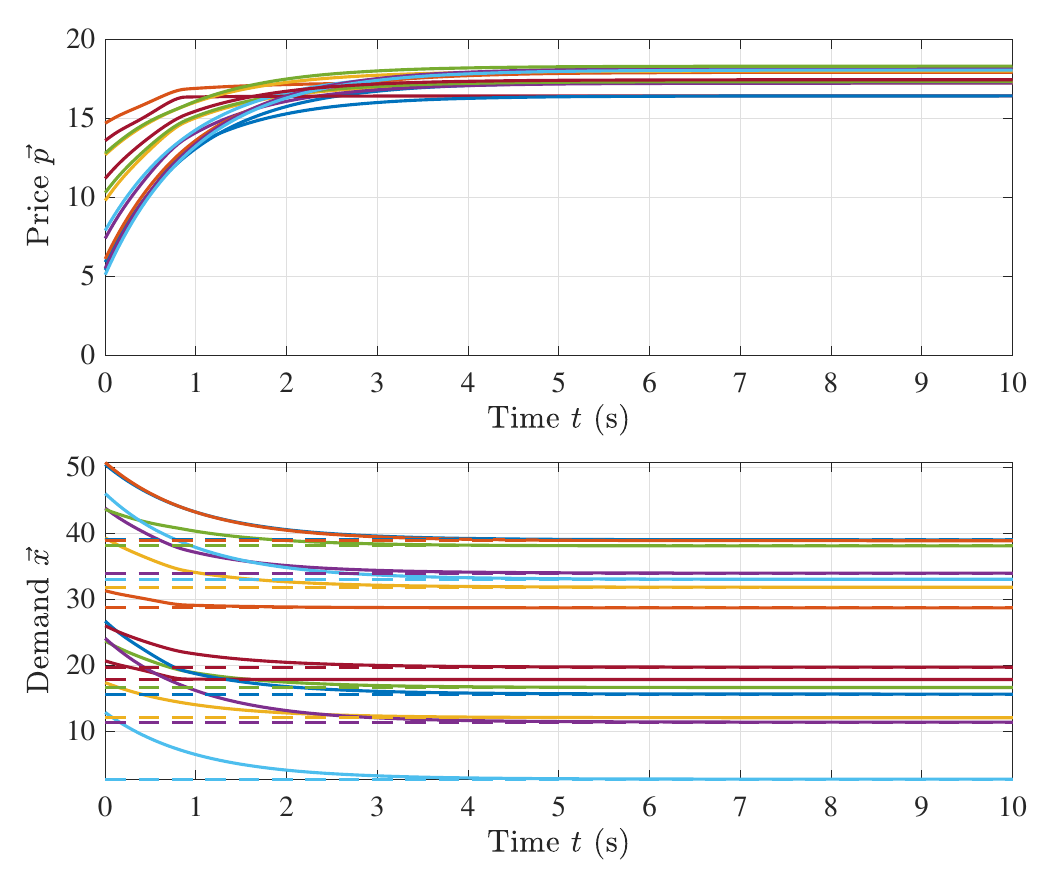}
\caption{Convergence of price signals and user demands of our proposed mechanism in IEEE 14-bus system with $14$ users, where the dashed lines represent the optimal demand profile to the global social welfare problem.}
\label{fig:p_x-t}
\end{figure}

% The first is a simple single time-period example that clearly shows the convergence to the global solution. Suppose there are 14 users on each bus that have actions x_i ∈ R which can automatically adapt to changes in price. The user cost functions are 
% %f_i (x_i )=1/2 〖(x_i-x_ibar)〗^2. 
% Obviously, the price for users converges quickly, in about six seconds and users’ action converge to the optimal values which can be computed in this simple setup. 
% Since the initial set price is low and randomly distributed between 5 and 15, and the externality of the system is higher, the price for users eventually increases with the number of iterations until it converges. On the contrary, for user action, because the initial set price is low, the user initial electricity consumption is high and then eventually decreases with the number of iterations until it converges.

\section{Conclusions and Outlook}
This paper extends adaptive pricing mechanisms for social welfare optimization to network-constrained energy systems with nonsmooth cost structures. By embedding DCOPF constraints into the operator’s objective and introducing a generalized gradient-based price update rule, we establish a provably convergent and privacy-preserving incentive design framework. Our theoretical analysis demonstrates the existence, uniqueness, and strong asymptotic stability of the equilibrium. Simulation results validate the practical effectiveness of the proposed mechanism in guiding user behavior toward globally optimal outcomes under realistic power network constraints.

Looking ahead, several important extensions remain open. First, practical systems are subject to uncertainty from renewable generation and stochastic user demand. Extending the current framework to handle uncertainty explicitly, either through robust or stochastic formulations of DCOPF, is a natural next step. Second, applying the method to AC power flow models would enhance its applicability to real-world grids, though this introduces significant nonconvexity. Third, while our current pricing update relies on analytical computation of generalized gradients, a promising direction is to develop data-driven or learning-based approximations for the operator's update rule, especially in settings where exact DCOPF gradients are computationally expensive or unavailable in real time. Finally, although our current results focus on the single time-step case, extending the convergence and stability guarantees to multi-timestamp scenarios is an important direction for future work, especially for dynamic and time-coupled systems.

\bibliographystyle{IEEEtran}
\bibliography{main}
\end{document}